\documentclass{article}
\usepackage{authblk}
\usepackage{verbatim}
\usepackage{graphicx, color} 
\usepackage{xcolor}
\usepackage{enumerate}
\usepackage{amsmath}
\usepackage{calligra}
\usepackage{stackengine}
\usepackage{amssymb}
\usepackage{geometry}
\usepackage{float}
\usepackage{booktabs}
\usepackage{multirow}
\usepackage{multicol}
\usepackage{pdflscape}
\usepackage{bm}
\usepackage{commath}
\usepackage{mathrsfs}
\usepackage{amsthm}
\usepackage[round, sort&compress]{natbib}
\usepackage{natbib}
\bibliographystyle{abbrvnat} 
\setcitestyle{authoryear,open={(},close={)}}

\usepackage[noend]{algpseudocode}[1]
\usepackage{algorithmicx}
\usepackage{algorithm}
\allowdisplaybreaks

\newcommand*\Let[2]{\State #1 $\gets$ #2}
\newcommand*\LetIn[2]{ #1 $\gets$ #2}

\algrenewcommand\algorithmicrequire{\textbf{Input:}}
\algrenewcommand\algorithmicensure{\textbf{Output:}}

\makeatletter
\renewcommand{\Procedure}[2]{%
  \csname ALG@cmd@\ALG@L @Procedure\endcsname{#1}{#2}%
  \def\jayden@currentfunction{#1}%
}
\renewcommand{\Function}[2]{%
  \csname ALG@cmd@\ALG@L @Function\endcsname{#1}{#2}%
  \def\jayden@currentfunction{#1}%
}
\newcommand{\funclabel}[1]{%
  \@bsphack
  \protected@write\@auxout{}{%
    \string\newlabel{#1}{{\textsc{\jayden@currentfunction}}{\thepage}}%
  }%
  \@esphack
}
\makeatother

\usepackage[textsize=tiny]{todonotes}


\newtheorem{theorem}{Theorem}[section] 
\DeclareMathOperator{\conv}{conv}
\newcommand{\R}{\mathbb{R}}
\newcommand{\B}{\mathbb{B}}
\newcommand{\Z}{\mathbb{Z}}
\newcommand{\NP}{\mathsf{NP}}
\newcommand{\Pclass}{\mathsf{P}}
\newcommand{\APX}{\mathsf{APX}}
\newcommand{\bigO}{\mathcal{O}}
\DeclareMathOperator{\poly}{poly}
\DeclareMathOperator{\OPT}{OPT}

\newtheorem{lemma}[theorem]{Lemma}     
\newtheorem{proposition}[theorem]{Proposition}
\newtheorem{corollary}[theorem]{Corollary}

\theoremstyle{definition}
\newtheorem{definition}[theorem]{Definition}

\theoremstyle{remark}

\allowdisplaybreaks

\newcommand{\mkc}{\textsc{mkc}}
\newcommand{\mkp}{\textsc{mkp}}
\newcommand{\mkcone}{\textsc{1-mkc}}

\theoremstyle{theorem}
\newtheorem*{mkc-problem}{Multidimensional knapsack cover problem with one continuous variable per dimension \rm{(\mkc)}}{\bf}{\rm}
\newcommand{\p}{\textsc{p}}
\newcommand{\pone}{\textsc{p}\textsubscript{1}}
\newcommand{\pu}{\textsc{p}\textsubscript{u}}
\newcommand{\q}{\textsc{q}}

\newcommand{\cip}{\textsc{cip}}
\newcommand{\pip}{\textsc{pip}}

\newcommand{\gap}{\textsc{gap}}

\usepackage[noend]{algpseudocode} 
\usepackage{microtype}

\title{Covering and packing mixed-integer linear programs with a fixed number of constraints: Approximation and convex hull}
\author{Kobe Grobben}
\author{Phablo F.~S. Moura}
\author{Hande Yaman}

\affil{Research Center for Operations Research \& Statistics, KU Leuven,
Belgium}

\geometry{
  left=1in,
  right=1in,
  top=1in,
  bottom=1in
}
\begin{document}

\maketitle

\begin{abstract} 
This paper presents an algorithmic study of a class of covering mixed-integer linear programming problems which encompasses classic cover problems, including multidimensional knapsack, facility location and supplier selection problems. We first show some properties of  optimal solutions, which are then used to decompose the problem into instances of the multidimensional knapsack cover problem with a single continuous variable per dimension. The proposed decomposition is used to design a polynomial-time approximation scheme for the problem with a fixed number of constraints. To the best of our knowledge, this is the first approximation scheme for such a general class of covering mixed-integer linear programs.
Moreover, we design a fully polynomial-time approximation scheme and an approximate linear programming formulation for the case with a single constraint.
These results improve upon the previously best-known 2-approximation algorithm for the knapsack cover problem with a single continuous variable.
Finally, we show a perfect compact formulation for the case where all variables have the same lower and upper bounds.
Analogous results are derived for the packing and more general variants of the problem.
\end{abstract}
\section{Introduction}
This paper presents an algorithmic study of mixed-integer linear programming problems, denoted by~\p, which are defined as:
\begin{align}
        \min  \;\;  &  \sum_{i\in [n]} \sum_{j \in [m]} v_{ij} x_{ij} + \sum_{i \in [n]} f_i y_i \label{objective function} &\\
    \text{s.t.}\;\; &  \sum_{i \in [n]} x_{ij} \ge d_j  & \forall j \in [m], \label{demand}\\
                & \ell_{ij} y_i \le x_{ij} \le c_{ij} y_i & \forall i \in [n], \; j \in [m], \label{Lower&upper}\\
                & y_i \in \{0,1\} & \forall i \in [n], \label{Integer}
\end{align}
where $m$ and $n$ are positive integers, $v, \ell, c \in \mathbb{Z}_\ge^{nm}$, $d \in \mathbb{Z}_>^{m}$ and $f \in \mathbb{Z}_\ge^{n}$ denote the nonnegative integer parameters with $c_{ij} \geq \ell_{ij}$  for all $i \in [n]$ and $j \in [m]$ and $[a]=\{1, \ldots, a\}$ for any positive integer $a$. 
This model is applicable to a range of problems, including 
variants of the knapsack problem and the supplier selection problem. In the latter, the aim is to cover the demand for $m$ items, where item $j\in [m]$ has demand $d_j$,  by choosing a subset of suppliers from a given set of $n$ candidates, subject to minimum and maximum order quantities $\ell_{ij}$ and $c_{ij}$ for each supplier $i\in [n]$ and each item $j\in [m]$, and to minimize the total cost, where $f_i$ is the fixed cost of selecting supplier $i\in [n]$ and $v_{ij}$ is the unit cost of an item $j\in [m]$ ordered from supplier $i\in [n]$. 

For completeness, we also consider the packing variant of~\p, defined analogously as: 
\begin{align*}
        \max  \;\;  &  \sum_{i\in [n]} \sum_{j \in [m]} v_{ij} x_{ij} + \sum_{i \in [n]} f_i y_i &\\
    \text{s.t.}\;\; &  \sum_{i \in [n]} x_{ij} \le d_j  & \forall j \in [m],\\
                & \ell_{ij} y_i \le x_{ij} \le c_{ij} y_i & \forall i \in [n], \; j \in [m],\\
                & y_i \in \{0,1\} & \forall i \in [n],
\end{align*}
under the same assumptions on all parameters.

The covering integer programs (\cip) are closely related to \p.
Given a matrix~$A \in \R^{m\times n}_\ge$, vectors $a \in \R^m_\ge$, and $b, h \in \R^n_\ge$, \cip\ is defined as $\min\{h^T z : Az \ge a, z \le b, z \in \Z^{n}_\ge\}$.
Carr et al.~\cite{carr2000strengthening} propose a $\Delta_1$-approximation for this problem, where $\Delta_1$ is the maximum number of non-zero coefficients in any row of~$A$.
Kolliopoulos and Young~\cite{kolliopoulos2005approximation} design a $\bigO(\log m)$-approximation for \cip, and observe that this is asymptotically the best possible unless $\Pclass=\NP$.
This inapproximability threshold follows from the classic set cover problem, which does not admit a $o(\log m)$-approximation unless $\Pclass=\NP$~\citep{raz1997sub}.
The set cover problem is a special case of~\cip\ where~$z$ is binary, and binary \cip\ is a particular case of~\p\ where $\ell_{ij}=c_{ij}=A_{ij}$, $v_{ij}=0$, and $f_i=h_i$ for all $i \in [n]$ and $j\in [m]$.
As a consequence, \p\ does not admit a $o(\log m)$-approximation  unless $\Pclass=\NP$. Furthermore, for arbitrary $m > 1$, no constant approximation ratio can be obtained for~\p\ unless $\Pclass = \NP$ \citep{srinivasan1995improved}. 

Although there is little hope of designing better approximations for~\p\ in general, the case where the number of constraints is fixed remains largely unexplored.
Observe that~\p\ with fixed~$m$ encompasses classic cover problems, including multidimensional knapsack cover problems and therefore, \p\ remains $\NP$-hard even when~$m=1$.
For any fixed $m\geq 2$,~\p\ with~$m$ constraints does not admit a fully polynomial-time approximation scheme (FPTAS)  unless $\Pclass = \NP$.
This is due to the analogous hardness result for multidimensional knapsack cover that can be obtained following essentially the same proof for the multidimensional knapsack (packing) devised by Magazine and Chern~\cite{magazine1984note}, and to the fact that multidimensional knapsack cover is a particular case of \p\ with a fixed number of constraints.
On the positive side, Frieze and Clarke~\cite{friezeApproximationAlgorithmsMdimensional1984} design a  polynomial-time approximation scheme (PTAS) to the multidimensional knapsack cover problem. Kulik and Shachnai~\cite{kulikThereNoEPTAS2010} show that no efficient PTAS (EPTAS) exists for this problem unless $W[1] =$ FPT. 
For the knapsack cover problem (i.e., the one-dimensional case), Güntzer and Jungnickel~\cite{guntzer2000approximate} propose an FPTAS which is based on a greedy strategy. 

The packing integer programs (\pip) are defined analogously and encompass classic combinatorial optimization problems such as knapsack, matching, and independent set problems.
As a consequence, binary \pip\ (and thus the packing variant of \p) cannot be approximated within $n^{1-\epsilon}$, for any $\epsilon >0$, unless $\Pclass=\NP$~\citep{zuckerman2006linear}.
There exists a vast literature on approximation algorithms for binary \pip, nearly all of which rely on randomized rounding techniques~\citep[see, e.g.,][]{ bansal2012solving,chekuri2004multidimensional, srinivasan1995improved}.

We also investigate a more general class of packing mixed-integer programs, defined as follows:
\begin{align*}
        \max  \;\;  &  \sum_{i\in [n]} \sum_{j \in [m]} v_{ij} x_{ij} + \sum_{i \in [n]} \sum_{j \in [m]}f_{ij} y_{ij} & \notag\\
    \text{s.t.}\;\; &  \sum_{i \in [n]} x_{ij} \le d_j  & \forall j \in [m], \\ 
    &  \sum_{j \in [m]} y_{ij} \le t_i  & \forall i \in [n], \\
                & \ell_{ij} y_{ij} \le x_{ij} \le c_{ij} y_{ij} & \forall i \in [n], \; j \in [m], \notag\\
                & y_{ij} \in \{0,1\} & \forall i \in [n], \; j\in [m], \notag
\end{align*}
where $t \in \Z^n_>$ and the other parameters are defined as before. 
Note that this includes relevant problems as the Generalized Assignment Problem (\gap). In particular, when $t_i = 1$, $v_{ij} = 0$ and $c_{ij} = \ell_{ij}$ for all $i \in [n]$ and $j \in [m]$, this problem is equivalent to \gap, which is known to be $\APX$-hard~\citep{chekuri2005polynomial}. 
On the positive side,  for any $\epsilon>0$, there is an approximation for \gap\ with ratio $(e/(e-1)+ \epsilon) \approx (1.582 + \epsilon)$  using an LP-rounding approach, which is currently the best known approximation for this problem~\citep{fleischerTightApproximationAlgorithms2006}.

\subsection*{Contributions}
This work provides a theoretical study of the three problem classes introduced above through the lens of approximation algorithms and perfect linear programming formulations.
In Section~\ref{sec:decomposition}, we first show some properties of some  optimal solutions of~\p.
Using these properties, we decompose \p\ into instances of a generalization of the multidimensional knapsack cover problem with one continuous variable per dimension.
The proposed decomposition of \p\ is used to design a PTAS for \p\ with a fixed number of constraints in Section~\ref{sec:ptas}.
To the best of our knowledge, this is the first approximation scheme for such a general class of covering mixed-integer linear programs. 
Furthermore, in a sense, it is the best possible approximation for this problem since \p\ admits no FPTAS, as previously observed. In Section~\ref{sec:PTAS_Packing},
using a similar strategy, we also design PTASes for the packing variants of \p.

For \p\ with a single constraint, we devise an FPTAS based on a dynamic programming approach in Section~\ref{sec:fptas}.
A byproduct of this result is an  FPTAS for the knapsack cover with a single continuous variable, thereby improving upon the previously best-known 2-approximation algorithm by Zhao and Li~\cite{zhaoApproximationAlgorithms012014}.
We note that Van Hoesel and Wagelmans~\cite{vanhoeselFullyPolynomialApproximation2001} present an FPTAS for the single-item capacitated economic lot-sizing problem that is more general than one-dimensional~\p\ in its consideration of backlogging and holding costs.
However, their model does not consider lower bounds.
Additionally, for each $\epsilon  \in (0,1)$, we prove the existence of a compact linear programming formulation that yields a solution of value at most $(1+\epsilon)$ times the optimal value of the problem. 

Finally, in Section~\ref{Section:extended}, we provide a perfect compact formulation for the one-dimensional \p\ with uniform bounds, that is, $\ell_{i} = \ell$ and $c_{i} = c $ for all $i \in [n]$.
Constantino~\cite{constantinoLowerBoundsLotSizing1998a} investigates the uncapacitated version of this problem, i.e., the case where $c\geq d$. 
More precisely, he derives two families of valid inequalities and shows that they describe the convex hull in this special case.
Although one-dimensional uniform \p\ is a special  case (with positive demand only in the last period) of a  single-item lot-sizing problem with piecewise linear costs, which is known to be polynomial-time solvable~\citep{hellionPolynomialTimeAlgorithm2012a}, no tight formulation is known for this problem. 
Hence the proposed perfect compact formulation contributes in this direction.
Concluding remarks and directions for further investigation are presented in Section~\ref{sec:conclusion}. Note that each section also includes an analysis of the corresponding packing problem.

\section{Decomposition into multidimensional knapsack problems}\label{sec:decomposition}
This section begins by showing a simple characterization of the instances of \p\ with optimal value equal to zero.

\begin{proposition}
    Let $I$ be an instance of \p, and let $S=\{i \in [n] : f_i=0\}$.
    It holds that the optimal value $\OPT(I)=0$ if and only if, for each $j \in [m]$, $\sum_{i \in Z_j} c_{ij} \ge d_j$,
    where $Z_j=\{i \in S :  v_{ij}=0\}$.
\end{proposition}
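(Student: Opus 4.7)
The plan is to prove each direction of the equivalence. Since every term $v_{ij}x_{ij}$ and $f_iy_i$ in the objective is a product of nonnegative quantities, and the variables themselves are nonnegative, $\OPT(I)\ge 0$, with equality if and only if some feasible $(x,y)$ annihilates every term, i.e., satisfies $f_iy_i=0$ and $v_{ij}x_{ij}=0$ for all $i\in [n]$ and $j\in [m]$.

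For the ``only if'' direction, suppose $(x,y)$ attains $\OPT(I)=0$. Then $y_i=1$ forces $f_i=0$, i.e., $i\in S$, and $x_{ij}>0$ forces $v_{ij}=0$. Fix $j\in [m]$: the only indices contributing positively to the demand constraint $\sum_i x_{ij}\ge d_j$ are those with $y_i=1$ and $v_{ij}=0$, and all such indices lie in $Z_j$. Bounding each contributing $x_{ij}$ by $c_{ij}y_i\le c_{ij}$ yields $\sum_{i\in Z_j}c_{ij}\ge d_j$.

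For the ``if'' direction, assume $\sum_{i\in Z_j}c_{ij}\ge d_j$ for every $j\in [m]$. I would construct a zero-cost feasible solution explicitly: choose a subset $T\subseteq S$, set $y_i=1$ for $i\in T$ and $y_i=0$ otherwise, and, for each $j$, assign values $x_{ij}\in [\ell_{ij},c_{ij}]$ on $T\cap Z_j$ that together saturate $d_j$, with $x_{ij}=0$ on the remaining indices. The covering hypothesis ensures the upper bounds on $T\cap Z_j$ are large enough to cover $d_j$, while $v_{ij}=0$ on $T\cap Z_j$ guarantees no contribution to the objective from those terms.

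The delicate step, and the part I expect to be the main obstacle, is selecting $T$ so that activating $y_i=1$ does not force a positive value of $v_{ij}x_{ij}$ through a lower-bound constraint $x_{ij}\ge \ell_{ij}$ on some coordinate $j$ with $v_{ij}>0$. I expect this to be handled by restricting $T$ to indices in $S$ whose nonzero $\ell_{ij}$'s occur only at coordinates where $v_{ij}=0$, while verifying that the covering hypothesis still applies after this restriction. Once $T$ is chosen this way, setting $x_{ij}$ to zero (or, where forced, to $\ell_{ij}$ with $v_{ij}=0$) on indices $i\in T\setminus Z_j$ preserves feasibility and zero cost, completing the construction.
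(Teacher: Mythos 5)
Your ``only if'' direction is correct and is essentially the paper's argument: a zero-value solution forces $f_i=0$ wherever $y_i=1$ and $x_{ij}=0$ wherever $v_{ij}>0$, so all the demand in coordinate $j$ is carried by indices in $Z_j$ and is bounded by $\sum_{i\in Z_j}c_{ij}$. The gap is in the converse, and it sits exactly where you suspected. You defer the ``delicate step'' of choosing $T$ so that no opened item incurs cost through a lower bound $x_{ij}\ge\ell_{ij}$ at a coordinate with $v_{ij}>0$, proposing to restrict $T$ to items free of such conflicts ``while verifying that the covering hypothesis still applies after this restriction.'' That verification is not a formality: it can fail, and no choice of $T$ repairs it. Concretely, take $n=m=2$, $f_1=f_2=0$, $d_1=d_2=10$, $v_{11}=v_{21}=v_{22}=0$, $v_{12}=1$, $c_{11}=c_{22}=10$, $c_{21}=0$, $\ell_{12}=c_{12}=5$, all other $\ell_{ij}=0$. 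Then $S=\{1,2\}$, $Z_1=\{1,2\}$, $Z_2=\{2\}$, and $\sum_{i\in Z_j}c_{ij}\ge d_j$ holds for both $j$; yet covering $d_1$ forces $y_1=1$ (since $c_{21}=0$), which forces $x_{12}\ge\ell_{12}=5$ and hence $\OPT(I)\ge 5>0$. So the construction you outline cannot be completed under the stated hypotheses.

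For context, the paper's own proof of this direction opens all of $S$ and sets $x_{ij}=0$ for $i\in S\setminus Z_j$, which is feasible only when $\ell_{ij}=0$ for those pairs; your instinct that the lower bounds are the crux is the right one, and the issue is not merely a missing argument on your side. A version of the statement that your construction does prove replaces $S$ by the set of items that can be opened at zero cost, namely $S'=\{i\in[n]: f_i=0 \text{ and } v_{ij}\ell_{ij}=0 \text{ for all } j\in[m]\}$, with $Z_j=\{i\in S': v_{ij}=0\}$: open exactly the items of $S'$, set $x_{ij}=c_{ij}$ for $i\in Z_j$ and $x_{ij}=\ell_{ij}$ otherwise, and both feasibility and zero cost follow. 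If you want to keep the statement as written, you need the additional hypothesis that $\ell_{ij}=0$ whenever $f_i=0$ and $v_{ij}>0$.
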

\begin{proof}
    Suppose first that $\OPT(I)=0$, and let $(x,y)$ be an optimal solution of $I$.
    Clearly, $f_i=0$ for each~$i \in [n]$ with $y_i=1$.
    Consider $i \in [n]$ with $y_i=1$, and $j \in [m]$.
    If $v_{ij}>0$, then $x_{ij}=0$.
    Hence, $\sum_{i \in Z_{j}} c_{ij} \ge \sum_{i \in Z_j} x_{ij}=\sum_{i \in [n]} x_{ij} \geq d_j$. 

    For the converse, define a vector $(x,y) \in \R^{nm}\times \{0,1\}^n$ that equals zero except for the entries $y_i=1$ for all $i \in [n]$ with $f_i=0$, and $x_{ij}=c_{ij}$ for all $j \in [m]$ and $i  \in Z_{j}$.
    If follows from the construction that $(x,y)$ is a feasible solution of $I$ of value equal to zero.
    Hence, $\OPT(I)=0$.

\end{proof}
\noindent Using the previous result, one may recognize and solve any instance with optimal value equal to zero in~$\bigO(nm)$ time.
Henceforth, we assume that all considered instances of \p\ have a positive optimal value.

We next present properties of an optimal solution of~\p, which allow us to decompose~\p~into instances of the following generalization of the multidimensional knapsack cover problem.
In this problem, we have $\eta$ items to choose from to cover $\mu$ dimensions, where item $i\in [\eta]$ has weight $\bar w_{ij}$ in dimension $j\in [\mu]$, the cost of choosing item $i\in [\eta]$ is $\bar f_i$ and the demand to be covered in dimension $j\in [\mu]$ is $\bar d_j$. In addition, we can use at most $\bar c_j $ amount of a resource to cover the demand in dimension $j\in [\mu]$ at a cost of $\bar v_j$ per unit. The aim is to cover the demand in all dimensions at minimum cost. More precisely,  

\noindent\begin{mkc-problem}\hfill\\
{\rm\textsc{Input:}} $\eta, \mu \in \Z_>$,  $\bar f \in \Z^\eta_\ge$, $\bar v \in \Z^\mu_\ge$, $\bar c \in \Z^\mu_>$, $\bar w \in \Z^{\eta\mu}_\geq$, and $\bar d \in \Z^\mu_\ge$.\\
{\rm\textsc{Output:}} Subset $S \subseteq [\eta]$,  and nonnegative $\alpha_{j} \leq \bar c_{j}$ for all $j \in [\mu]$ such that $\sum_{ i \in S} \bar w_{ij} \ge \bar d_j -\alpha_{j}$ for every~$j \in [\mu]$.\\ 
{\rm\textsc{Objective:}} Minimize $\sum_{i \in S} \bar f_i + \sum_{j \in [\mu]}\bar v_{j} \alpha_{j}$.
\end{mkc-problem}
\noindent Note that an instance of \mkc\ is feasible if and only if $\sum_{ i \in [\eta]} \bar w_{ij} \ge \bar d_j -\bar c_{j}$ for every $j \in [\mu]$.
We assume henceforth that all considered instances of \mkc\ are feasible.

Let $X=\{(x,y) \in \R^{nm}\times \{0,1\}^n : (x,y) \text{ satisfies }\eqref{demand} - \eqref{Integer} \}$. Let $j \in [m]$ and $k \in [n]$.
We define 
$L_{kj} = \{ i \in [n] \setminus \{k\} : v_{ij} > v_{kj} \} \cup \{ i \in [n] \setminus \{k\} :   v_{ij} = v_{kj} \text{ and } i < k\}$ and $C_{kj} = \{ i \in [n] \setminus \{k\} : v_{ij} < v_{kj}\} \cup \{ i \in [n] \setminus \{k\} :  v_{ij} = v_{kj} \text{ and } i > k\}.$
Note that $\{L_{kj}, C_{kj}\}$ is a partition of $[n]\setminus\{k\}$ which only depends on the input instance, and can be easily computed in linear time. The key theorem is presented next.

\begin{theorem}\label{thm:nice-optimal}
    Let $I$ be an instance of~\p.
    There exists a function $g \colon [m]\to[n]$ such that $I$ admits an optimal solution $(x,y) \in \R^{nm} \times \{0,1\}^n$ where, for each $j \in [m]$, 
    $y_{g(j)}=1$, $x_{ij}=\ell_{ij}y_i$ for all $i\in L_{g(j),j}$ and~$x_{ij}=c_{ij}y_i$ for all $i\in C_{g(j),j}$.
\end{theorem}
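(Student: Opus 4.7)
The theorem is essentially a repackaging of Lemma~\ref{lemma:opt-solution}, so my plan is to derive it directly from that lemma. I would start by invoking Lemma~\ref{lemma:opt-solution} to obtain an optimal solution $(x,y)$ satisfying assertions \eqref{it:at-most-one}--\eqref{it:coherence}, and define $g\colon[m]\to[n]$ by letting $g(j)$ be the index selected in assertion~\eqref{it:at-most-one}. For this definition to yield a well-defined function I need $g(j)$ to exist for every $j \in [m]$: since by the preceding proposition we have reduced to instances of positive optimal value, at least one index $i$ has $y_i=1$, and in each of the three sub-cases of assertion~\eqref{it:at-most-one} the chosen index satisfies $y_{g(j)}=1$ (either because $x_{g(j),j}$ lies strictly between its bounds, forcing $y_{g(j)}=1$, or because the $\arg\max/\arg\min$ is taken over indices with $y_i=1$). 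This immediately gives the first conclusion $y_{g(j)}=1$.

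Next, I would verify the two conclusions about the entries $x_{ij}$ by splitting on whether $y_i=0$ or $y_i=1$. If $y_i=0$, the feasibility constraint $\ell_{ij}y_i \le x_{ij} \le c_{ij}y_i$ forces $x_{ij}=0=\ell_{ij}y_i=c_{ij}y_i$, so the claimed equality holds regardless of whether $i$ lies in $L_{g(j),j}$ or $C_{g(j),j}$. The remaining substantive case is $y_i=1$ and $i \neq g(j)$. For $i \in L_{g(j),j}$ with $y_i=1$, the definition of $L_{g(j),j}$ gives two sub-cases: either $v_{ij}>v_{g(j),j}$, in which case Lemma~\ref{lemma:opt-solution}\eqref{it:trivial-assignment} yields $x_{ij}=\ell_{ij}$; or $v_{ij}=v_{g(j),j}$ with $i<g(j)$, in which case Lemma~\ref{lemma:opt-solution}\eqref{it:coherence} yields $x_{ij}=\ell_{ij}$. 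Symmetrically, for $i \in C_{g(j),j}$ with $y_i=1$ I would combine $v_{ij}<v_{g(j),j}$ with assertion~\eqref{it:trivial-assignment}, and $v_{ij}=v_{g(j),j}$ with $i>g(j)$ with assertion~\eqref{it:coherence}, obtaining $x_{ij}=c_{ij}$ in both situations. Since $\{L_{g(j),j}, C_{g(j),j}\}$ is a partition of $[n]\setminus\{g(j)\}$, every case is covered.

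I do not anticipate any real obstacle, as the sets $L_{kj}$ and $C_{kj}$ were tailored to exactly match the conclusions of Lemma~\ref{lemma:opt-solution}; the only mildly delicate point is the well-definedness of $g(j)$ in degenerate situations (e.g., a constraint $j$ with $d_j=0$ where all variables could be set to zero), but the reduction to instances with positive optimal value rules this out and guarantees that the $\arg\max/\arg\min$ in assertion~\eqref{it:at-most-one} always returns a valid index. Consequently, the theorem follows by a direct translation of conditions rather than any new argument.
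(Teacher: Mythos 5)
Your proof is correct and follows exactly the route the paper takes: the paper's own proof of this theorem is a one-line appeal to Lemma~\ref{lemma:opt-solution} and the definitions of $L_{kj}$ and $C_{kj}$, and you have simply written out the case analysis (well-definedness of $g(j)$, the $y_i=0$ case, and the matching of the $L$/$C$ membership conditions to assertions~\eqref{it:trivial-assignment} and~\eqref{it:coherence}) that the paper leaves implicit.
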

\begin{proof}
Let $(x,y) \in \R^{nm + n}$ be an optimal solution to~\p\ that is  an extreme point of $conv(X)$, and let $j \in [m]$.
Define $N_1 = \{ i \in [n] : y_i = 1, x_{ij}=\ell_{ij}\}$, $N_2 = \{ i \in [n] : y_i = 1, x_{ij}=c_{ij}\}$, and $N_3 = \{ i \in [n] : \ell_{ij} < x_{ij} < c_{ij}\}$.
Note that we cannot have $N_1=N_2=N_3=\emptyset$ as $d >0$.

Suppose to the contrary that $\ell_{a j} < x_{a j} < c_{a j}$ and $\ell_{b j} < x_{b j} < c_{b j}$ for some 
    $a, b \in [n]$ with $a\neq b$. 
    Let $\epsilon = \min_{i \in \{a,b\}}\min\{c_{ij}-x_{ij}, \: x_{ij} - \ell_{ij}\}$.
    We define two vectors $x^1, x^2 \in \R^{nm+n}$ such that $x^1$ and $x^2$ are equal to $x$ except for the following entries:
    $x_{a j}^{1} = x_{a j} + \epsilon$, $x_{b j}^{1} = x_{b j} - \epsilon$, 
    $x_{a j}^{2} = x_{a j} - \epsilon$,   and $x_{b j}^{2} = x_{b j} + \epsilon$.
It is clear that $(x^1,y)$ and $(x^2,y)$ belong to $X$.
Moreover, $(x,y)=\frac{1}{2} (x^1,y) + \frac{1}{2} (x^2,y)$, a contradiction to the choice of $(x,y)$. Hence, there exists at most one $a\in [n]$
 with $\ell_{a j} < x_{aj} < c_{aj}$, and so $|N_3|\leq 1$. 

We define $g(j)=i$ if $N_3=\{i\}$.
Otherwise, it holds that $N_3=\emptyset$, and we define
\[ g(j) = \begin{cases} \displaystyle \arg\max_{i \in N_2} \{v_{ij}\}  & \parbox[t]{0.55\linewidth}{\raggedright if $N_2\neq \emptyset$,} \\[10pt] 
\displaystyle \arg\min_{i \in N_1}  \{ v_{ij}\} & \text{otherwise}. \end{cases} \]
We next prove that the following two assertions hold.
    \begin{enumerate}[(i)]
        \item For each $i \in [n]\setminus\{g(j)\}$  such that $y_i=1$ and $v_{ij}\neq v_{g(j),j}$, we have $x_{ij}=\ell_{ij}$ if $v_{ij} > v_{g(j),j}$  
         and $x_{ij}=c_{ij}$ 
        otherwise.
        \label{it:trivial-assignment}
        \item For each $i \in [n]\setminus\{g(j)\}$  such that $y_i=1$ and $v_{ij}=v_{g(j),j}$, we have $x_{ij}=\ell_{ij}$ if~$i < g(j)$  
        and $x_{ij}=c_{ij}$ otherwise.
        \label{it:coherence}
    \end{enumerate}

Suppose first that $\ell_{g(j),j} < x_{g(j),j} < c_{g(j),j}$.
Suppose to the contrary that $y_i=1$, $x_{ij}=c_{ij} > \ell_{ij}$  and $v_{ij}>v_{g(j),j}$ for some $i \in [n]\setminus\{g(j)\}$.
Define a vector $\bar  x \in \R^{nm}$ to be  equal to $x$ except for $\bar x_{ij} = x_{ij}-\epsilon$ and $\bar x_{g(j),j} = x_{g(j),j} +\epsilon$, where $\epsilon:=\min\{c_{g(j),j} - x_{g(j),j}, \:  x_{ij}-\ell_{ij}\}$.
It follows that $(\bar x,y)$ is a feasible solution to~\p~of cost strictly smaller than $(x,y)$, a contradiction.
Analogously, one can prove that $x_{ij}=\ell_{ij}$ if $y_i=1$ and  $v_{ij}>v_{g(j),j}$.
Suppose now that $x_{ij} \in \{\ell_{ij}y_i, c_{ij}y_i\}$ for all $i \in [n]$.
Note that $v_{ij} \ge v_{i'j}$ for all $i, i' \in [n]$ such that $y_i=y_{i'}=1$, $x_{ij}=\ell_{ij}<c_{ij}$ and $x_{i'j}= c_{i'j}>\ell_{i'j}$.
Otherwise, one could obtain a solution cheaper than $(x,y)$ by increasing~$x_{ij}$ and decreasing~$x_{i'j}$ by the same (sufficiently small) amount.
Let $i \in [n]\setminus\{g(j)\}$ such that $y_i=1$ and $v_{ij}\neq v_{g(j),j}$.
If $x_{g(j),j} = c_{g(j),j}$, then the previous observation together with the definition of $g(j)$ imply that $x_{ij}=c_{ij}$ if $v_{ij}<v_{g(j),j}$, and $x_{ij}=\ell_{ij}$ if $v_{ij}>v_{g(j),j}$.
If $x_{g(j),j} = \ell_{g(j),j}$, then there is no $i \in [n]$ with $y_i=1$ and $x_{ij}=c_{ij}$. Thus~\eqref{it:trivial-assignment} follows from the choice of $g(j)$.

We define the \emph{set of inversions} of any solution $(x',y')$ as
\[\{i \in \{1, \ldots, g(j)-1\} : x'_{ij}=c_{ij}, y'_i=1\} \cup \{i \in \{g(j)+1, \ldots, n\} : x'_{ij}=\ell_{ij}, y'_i=1\}.\]
Let $V$ be the set of inversion of $(x,y)$.
Suppose that $(x,y)$ has at least one inversion (i.e., $V\neq \emptyset$).
Assume, without loss of generality, that this inversion is due to an element in $[n]$ smaller than $g(j)$, and choose~$i \in V$ to be the largest number such that $i < g(j)$.
For each $i' \in \{i+1, \ldots, g(j)\}$ with $y_{i'}=1$ and $v_{i'j}=v_{g(j),j}$, starting from $g(j)$ to $i+1$, we first move from $x_{ij}$ to $x_{i'j}$ the maximum quantity~$\epsilon$ such that either $x_{ij}-\epsilon=\ell_{ij}$ or $x_{i'j}+\epsilon = c_{i'j}$. Then we are done in the former case, or we repeat the procedure with~$i'-1$ or end the procedure when~$i'-1 = i$ in the latter one.
Let $\bar x \in \R^{nm}$ be the vector obtained at the end of this procedure.
Note that there is $t \in \{i, i+1, \ldots, g(j)\}$ with $y_{t}=1$ and $v_{tj}=v_{g(j),j}$ such that, for each $i' \in \{i, \ldots, g(j)\}$ with $y_{i'}=1$ and $v_{i'j}=v_{g(j),j}$, we have  $\bar x_{i' j} = \ell_{i'j}$ if~$i' < t$ and $\bar x_{i' j} = c_{i'j}$ if~$i' > t$.
Hence, the number of inversions of $(\bar x, y)$ is strictly smaller than the number of inversions of $(x,y)$.
Moreover, since the procedure only changes entries $i \in [n]$ such that $v_{ij}=v_{g(j),j}$, it is clear that~$(\bar x, y)$ is an optimal solution to~\p~that still respects~\eqref{it:trivial-assignment}.
By repeating the previous procedure at most~$n-1$ times, we obtain an optimal solution to~\p~which satisfies~\eqref{it:trivial-assignment}, and~\eqref{it:coherence}. The case of an inversion involving an element in $[n]$ larger than $g(j)$ is handled analogously.

\end{proof}

\noindent It follows from Theorem~\ref{thm:nice-optimal} that problem~\p\ boils down to finding such a function~$g$ and solving the \mkc\ instance associated with~$g$.
Using this idea, we next propose an alternative formulation for~\p~applied to  the (sub)partitions of $[n]$ given by $L$ and $C$ for all possible $\bigO(n^m)$ choices of $(g(1), \ldots, g(m)) \in [n]^{m}$.

Let $\mathcal{G}$ be the set of all functions $g \colon [m]\to [n]$.
For every $g\in \mathcal{G}$, $i \in [n]$ and $j \in [m]$, define
\[w_{ij}^g = 
\begin{cases}
\ell_{ij} & \text{ if $i \in L_{g(j), j}$,}\\
c_{ij} & \text{ if $i \in C_{g(j), j}$,} \\
\ell_{g(j)j} & \text{ if $i= g(j)$.} 
\end{cases}\]
Let $g \in \mathcal{G}$.
For each $j \in [m]$,  define $c^g_{g(j),j} = c_{g(j),j} - \ell_{g(j),j}$, and $v^g_j = v_{g(j),j}$.
For each $i \in [n]$,  define $f^g_i = f_i + \sum_{j\in [m]} v_{ij} w^g_{ij}$.

The following mixed-integer linear formulation has binary variables $z^g$ and $y^g_i$ for all $g \in \mathcal{G}$ and $i \in [n]$, and real variables $\alpha^g_j$ for all $g \in \mathcal{G}$ and $j \in [m]$.

\begin{align}
    \min  \;\;  & \sum_{g \in \mathcal{G}} \left(\sum_{i \in [n]}  f^g_i y_i^g + \sum_{j \in [m]} v_j^g \alpha_j^g \right)  & \label{Multi1}\\
    \text{s.t.}\;\;             & \sum_{g \in \mathcal{G} }z^g = 1, &  \label{Multi2}\\
    &y^g_i \le z^g & \forall g \in \mathcal{G}, i \in [n],\\
    & y^g_{g(j)} = z^g & \forall g \in \mathcal{G}, j \in [m], \\
     & \sum_{i\in [n]} w^g_{ij} y^g_i + \alpha^g_j \ge d_j z^g   & \forall g \in \mathcal{G}, j \in [m],  \\
                &  c^g_{g(j), j} z^g \ge \alpha^g_j \ge 0 & \forall g \in \mathcal{G}, j \in [m],\\
                & z^g \in \{0,1\} & \forall g \in \mathcal{G}, \\
                & y^g_i \in \{0,1\} & \forall g \in \mathcal{G}, i \in [n]. \label{Multi9}
\end{align}

\begin{proposition}\label{prop:union-polyhedral-model}
    The formulation \eqref{Multi1}--\eqref{Multi9} correctly models~\p. 
\end{proposition}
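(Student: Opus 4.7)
The plan is to establish the equivalence in two directions: first, that any feasible solution of~\eqref{Multi1}--\eqref{Multi9} yields a feasible solution of~\p\ with the same objective value, and second, that there is always a feasible solution of~\eqref{Multi1}--\eqref{Multi9} whose value matches $\OPT(I)$. Since the outer minimization is the same, these two directions together imply that the optimal values coincide and that an optimal MIP solution can be decoded into an optimal \p\ solution.

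For the forward direction, I would start from a feasible solution $(z, y, \alpha)$ of the MIP. Constraint~\eqref{Multi2} together with $z^g \in \{0,1\}$ forces a unique $g^\star \in \mathcal{G}$ with $z^{g^\star}=1$, and then $y^g \equiv 0$ and $\alpha^g \equiv 0$ for every $g \neq g^\star$, so only the $g^\star$ block contributes to the objective. I would then build a candidate $(x,y)$ for \p\ by setting $y_i := y^{g^\star}_i$, $x_{ij} := w^{g^\star}_{ij}\, y^{g^\star}_i$ for $i \neq g^\star(j)$, and $x_{g^\star(j),j} := \ell_{g^\star(j),j}\, y^{g^\star}_{g^\star(j)} + \alpha^{g^\star}_j$. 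Feasibility with respect to~\eqref{Lower&upper} is immediate from the definition of $w^{g^\star}$ and the bound $0 \le \alpha^{g^\star}_j \le c^{g^\star}_{g^\star(j),j} = c_{g^\star(j),j}-\ell_{g^\star(j),j}$; the demand inequality~\eqref{demand} follows by reading off the MIP demand constraint. For the objective, expanding $f^{g^\star}_i = f_i + \sum_{j\in[m]} v_{ij}\, w^{g^\star}_{ij}$ and using $y^{g^\star}_{g^\star(j)}=1$ shows that $\sum_i f^{g^\star}_i y^{g^\star}_i + \sum_j v^{g^\star}_j \alpha^{g^\star}_j = \sum_i f_i y_i + \sum_{i,j} v_{ij} x_{ij}$, matching~\eqref{objective function}.

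For the reverse direction, I would invoke Theorem~\ref{thm:nice-optimal} to obtain an optimal $(x,y)$ of~\p\ together with a function $g^\star \colon [m]\to[n]$ such that $y_{g^\star(j)}=1$, $x_{ij}=\ell_{ij}y_i$ for $i \in L_{g^\star(j),j}$, and $x_{ij}=c_{ij}y_i$ for $i \in C_{g^\star(j),j}$. I would then lift this to the MIP by setting $z^{g^\star}=1$, $z^g=0$ for $g \neq g^\star$, $y^{g^\star}_i := y_i$, and $\alpha^{g^\star}_j := x_{g^\star(j),j} - \ell_{g^\star(j),j}$, with all other block variables equal to zero. Verifying constraints~\eqref{Multi2}--\eqref{Multi9} is routine: the $y^g \le z^g$ and $y^g_{g(j)} = z^g$ constraints are trivially satisfied for the zero blocks, and for $g^\star$ they follow from $y_{g^\star(j)}=1$; the $\alpha$ bounds follow from $\ell_{g^\star(j),j} \le x_{g^\star(j),j} \le c_{g^\star(j),j}$; and the demand constraint follows by rewriting $\sum_i x_{ij} \ge d_j$ using the structural description of $(x,y)$. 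The cost identity is the same computation as in the forward direction, read in reverse.

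The main obstacle I anticipate is the bookkeeping in the objective: it is crucial that the ``shifted'' coefficients $f^g_i = f_i + \sum_j v_{ij} w^g_{ij}$, together with $w^g_{g(j),j} = \ell_{g(j),j}$ and $v^g_j \alpha^g_j$ playing the role of $v_{g(j),j}(x_{g(j),j}-\ell_{g(j),j})$, reassemble exactly into the original $\sum_{i,j} v_{ij} x_{ij} + \sum_i f_i y_i$ with no double counting. Once this accounting is made explicit for $i = g^\star(j)$ separately from $i \in L_{g^\star(j),j} \cup C_{g^\star(j),j}$, both directions close cleanly.
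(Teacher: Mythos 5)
Your proposal is correct and follows essentially the same route as the paper: the substantive direction invokes Theorem~\ref{thm:nice-optimal} to lift a structured optimal solution of~\p\ into the block $g^\star$ of the formulation, and the objective accounting via $f^g_i = f_i + \sum_j v_{ij} w^g_{ij}$ and $v^g_j \alpha^g_j = v_{g(j),j}(x_{g(j),j}-\ell_{g(j),j})$ is exactly the computation in~\eqref{eq:w-definition}--\eqref{eq:LC-definition}. The only difference is presentational: you spell out the decoding of an arbitrary feasible MIP solution back into a feasible solution of~\p, which the paper dispatches in one sentence.
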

\begin{proof}
    Let~$I$ be an instance of \p.
    Let $g\colon [m]\to [n]$ and $( x, y)$ as given in Theorem~\ref{thm:nice-optimal}.
    Let us define a vector $(z,\bar y,\alpha) \in \{0,1\}^{|\mathcal{G}|}\times \{0,1\}^{|\mathcal{G}|n} \times \R^{|\mathcal{G}|m}$ with all entries equal to zero except for $z^g=1$, $\bar y^g= y$, and $\alpha^g_j =  x_{g(j),j} - \ell_{g(j),j}$ for all $j \in [m]$.
    One may easily verify that $(z,\bar y,\alpha)$ satisfies constraints~\eqref{Multi2}--\eqref{Multi9} as $(x, y)$ satisfies~\eqref{demand}--\eqref{Integer}.
    Thus, the objective function~\eqref{Multi1} on $(z,\bar y,\alpha)$ is equal to
    \begin{align}
    \sum_{i \in [n]}  f^g_i \bar y_i^g + \sum_{j \in [m]} v_j^g \alpha_j^g & = \sum_{i\in [n]} \sum_{j \in [m]} v_{ij} w^g_{ij} \bar y^g_i + \sum_{i\in [n]} f_i \bar y^g_i + \sum_{j \in [m]} v^g_{j}\alpha^g_{j}\nonumber\\
    \begin{split} \label{eq:w-definition}
         &= \sum_{j \in [m]} \left(\sum_{i \in L_{g(j),j}} v_{ij} \ell_{ij}\bar y_i^g + \sum_{i \in C_{g(j),j}} v_{ij} c_{ij}\bar y_i^g + v_{g(j)j} \ell_{g(j)j}\bar y_{g(j)}^g\right) \\
         &\phantom{ = } + \sum_{i\in [n]} f_i \bar y^g_i + \sum_{j \in [m]} v^g_{j}\alpha^g_{j}
    \end{split}\\
    & = \sum_{j \in [m]} \sum_{i \in [n]\setminus\{g(j)\}} v_{ij} x_{ij} + \sum_{i\in [n]} f_i y_i + \sum_{j \in [m]} v_{g(j),j} x_{g(j),j}\label{eq:LC-definition}\\
    & = \sum_{j \in [m]} \sum_{i \in [n]} v_{ij} x_{ij} + \sum_{i\in [n]} f_i  y_i, \nonumber
    \end{align}
    where~\eqref{eq:w-definition} follows from the definition of~$w$, and~\eqref{eq:LC-definition} holds because, for each $j \in [m]$ and $i\in [n]\setminus\{g(j)\}$, we have $x_{ij}=\ell_{ij}  y_i$ if $i \in L_{g(j),j}$, $ x_{ij}=c_{ij}  y_i$  if  $i \in C_{g(j),j}$, and $ x_{ij}= \alpha^g_j+\ell_{g(j),j}$ otherwise. Finally, it is clear that any feasible solution of~\eqref{Multi2}--\eqref{Multi9} induces a feasible solution of~\eqref{demand}--\eqref{Integer} with the same value.
    Therefore, \eqref{Multi1}--\eqref{Multi9} is a correct formulation of~\p.

\end{proof}

\noindent This formulation reveals a decomposition of~\p~into a collection of \mkc\ instances.
Precisely, for each~$g \in \mathcal{G}$, the formulation for~$g$ is
\begin{align}
    \min  \;\;  & \sum_{i \in [n]} f^g_i y^g_i + \sum_{j \in [m]} v^g_j \alpha^g_j  & \label{Decomposition1}\\
    \text{s.t.}\;\; & \sum_{i \in [n]} w^g_{ij} y^g_i +  \alpha^g_j\ge d_j   & \forall j \in [m], \\
                & c^g_{g(j),j} \ge \alpha^g_j \ge 0 & \forall j \in [m], \\
                & y^g_{g(j)} = 1 & \forall j \in [m], \\
                & y^g_i \in \{0,1\} & \forall i \in [n].\label{Decomposition2}
\end{align}
Note that, for every $i \in [n]$, $f^g_i$ is the cost of choosing item~$i$, and $w^g_{ij}$ is the weight of item~$i$ in knapsack~$j$, where $j \in [m]$.
For each~$j \in [m]$,  $d_j$ is the demand of  knapsack~$j$, $v^g_j$ and $ c^g_{g(j),j}$ are the cost and the upper bound of the continuous variable associated with knapsack~$j$, respectively.

We conclude this section with the following corollary of Proposition~\ref{prop:union-polyhedral-model}, which is used in the algorithms designed in the next sections.

\begin{corollary}\label{cor:optimal}
    Let $I$ be an instance of \p.
    There exists a collection~$\mathcal{I}$ of~$\bigO(n^{m})$ instances of \mkc\ such that 
    \[\OPT(I) = \min_{I' \in \mathcal{I}} \OPT_{\mkc}(I').\]
\end{corollary}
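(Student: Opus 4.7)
The plan is to instantiate the decomposition established by Proposition~\ref{prop:union-polyhedral-model}. Since the variables $z^g$ are binary and $\sum_{g \in \mathcal{G}} z^g = 1$ by~\eqref{Multi2}, any feasible solution of \eqref{Multi1}--\eqref{Multi9} selects exactly one $g \in \mathcal{G}$ with $z^g = 1$, and conditioning on that choice reduces the formulation to the per-$g$ formulation \eqref{Decomposition1}--\eqref{Decomposition2}. Hence $\OPT(I) = \min_{g \in \mathcal{G}} \OPT_g$, where $\OPT_g$ denotes the optimal value of \eqref{Decomposition1}--\eqref{Decomposition2}. Since $|\mathcal{G}| = n^m$, the remaining task is to exhibit, for each $g \in \mathcal{G}$, an \mkc\ instance $I^g$ with $\OPT_{\mkc}(I^g) = \OPT_g$.

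To build $I^g$, I would eliminate the forced equalities $y^g_{g(j)} = 1$ by letting $T_g = \{g(j) : j \in [m]\}$ and moving the items of $T_g$ out of the decision. Setting $K_g = \sum_{i \in T_g} f^g_i$, $\eta = n - |T_g|$, $\mu = m$, item data $\bar f_i = f^g_i$ and $\bar w_{ij} = w^g_{ij}$ for $i \in [n]\setminus T_g$, continuous-variable data $(\bar v_j, \bar c_j) = (v^g_j, c^g_{g(j),j})$, and demands $\bar d_j = \max\{0,\, d_j - \sum_{i \in T_g} w^g_{ij}\}$ yields an \mkc\ template whose optimum equals $\OPT_g - K_g$. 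To absorb the additive constant $K_g$ into the \mkc\ objective, I would append a single dummy dimension in which every item has weight $0$, the demand is $1$, the upper bound is $1$, and the continuous-variable unit cost is $K_g$, so that the associated $\alpha$ is forced to equal $1$ and contributes exactly $K_g$.

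Taking the minimum over $g \in \mathcal{G}$ then gives the claimed identity $\OPT(I) = \min_{I' \in \mathcal{I}} \OPT_{\mkc}(I')$ with $\mathcal{I} = \{I^g : g \in \mathcal{G}\}$ of cardinality $n^m = \bigO(n^m)$. I expect the main (minor) obstacle to be verifying that all \mkc\ parameters lie in the prescribed domains: in particular, $\bar c_j > 0$ may fail when $c_{g(j),j} = \ell_{g(j),j}$, in which case $\alpha^g_j$ is identically zero and dimension $j$ collapses to a covering constraint on the remaining $y$ variables that can either be omitted or handled by a pre-processing step without altering the optimum. Integrality of the dummy-dimension cost is automatic since $K_g$ is a nonnegative integer by construction.
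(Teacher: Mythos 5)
Your proposal is correct and takes essentially the same route as the paper, which obtains the corollary directly from Proposition~\ref{prop:union-polyhedral-model}: condition on the unique $g\in\mathcal{G}$ with $z^g=1$ and read off the per-$g$ subproblem \eqref{Decomposition1}--\eqref{Decomposition2} as an \mkc\ instance, one for each of the $n^m$ choices of $g$. You are in fact more explicit than the paper about turning that subproblem into a \emph{literal} \mkc\ instance---substituting out the forced variables $y^g_{g(j)}=1$ is genuinely necessary (simply dropping those equalities can strictly decrease the optimum, since the continuous slack $\alpha_j$ would then provide capacity without paying $f^g_{g(j)}$), and your dummy dimension correctly absorbs the resulting additive constant. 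One small caveat: when $c_{g(j),j}=\ell_{g(j),j}$ the degenerate dimension cannot simply be \emph{omitted}, since it still imposes the covering constraint $\sum_{i\in S}\bar w_{ij}\ge \bar d_j$; a clean fix is to double $\bar w_{\cdot j}$ and $\bar d_j$ and set $\bar c_j=1$, $\bar v_j=0$, which preserves the feasible sets by integer parity while meeting the requirement $\bar c_j>0$.
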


The formulation developed for~\p~can be naturally adapted to its packing counterpart. By redefining the sets $L_{kj} = \{ i \in [n] \setminus \{k\} : v_{ij} < v_{kj} \} \cup \{ i \in [n] \setminus \{k\} :   v_{ij} = v_{kj} \text{ and } i > k\}$ and $C_{kj} = \{ i \in [n] \setminus \{k\} : v_{ij} > v_{kj}\} \cup \{ i \in [n] \setminus \{k\} :  v_{ij} = v_{kj} \text{ and } i < k\}$, we can decompose the problem into a collection of instances of  multidimensional knapsack packing problem with one continuous variable per dimension, which is denoted by~\mkp.

\section{{A polynomial-time approximation scheme}}\label{sec:ptas}

Next, we design a polynomial-time approximation scheme for~\p~(with $m$ fixed), which is based on the decomposition presented in the previous section.
Frieze and Clarke~\cite{friezeApproximationAlgorithmsMdimensional1984} propose a PTAS for the Multidimensional Knapsack Cover problem without continuous variables. 
In what follows, we extend their PTAS to include exactly one continuous variable per dimension. The analysis of the proposed algorithm follows the methodology of \cite{friezeApproximationAlgorithmsMdimensional1984}.

Let $\eta, \mu \in \Z_>$,  $\bar f \in \Z^\eta_\ge$, $\bar v \in \Z^\mu_\ge$, $\bar c \in \Z^\mu_>$, $\bar w \in \Z^{\eta \mu}_\geq$, and $\bar d \in \Z^\mu_\ge$.
The Multidimensional Knapsack Cover problem with one continuous variable per dimension (\textsc{mkc}) is equivalent to
\begin{align}
    \min  \;\;  & \sum_{i \in [\eta]} \bar f_i y_i + \sum_{j \in [\mu]} \bar v_j \alpha_j  & \label{objective:knapsack-cover} \\
    \text{s.t.}\;\; & \sum_{i \in [\eta]} \bar w_{ij} y_i \ge \bar d_j - \alpha_j   & \forall j \in [\mu], \label{ineq:knapsack-cover-dim}\\
                & \bar c_{j} \ge \alpha_j \ge 0 & \forall j \in [\mu], \\
                & y_i \in \{0,1\} & \forall i \in [\eta].\label{Relaxation}
\end{align}

\noindent Let $\epsilon > 0$, and  let $k = \min\{\eta, \: \lceil \mu /\epsilon\rceil\}$.
We next design an algorithm $\mathcal{A}_\epsilon$ for \mkc\ that runs in polynomial time when $\mu$ (a.k.a. \emph{dimension}) is fixed.
For every $S \subseteq [\eta]$, define~$T(S) = \{i \in [\eta] \setminus S :  \bar f_i > \min_{t \in S} \bar f_t\}$. 
Let $LP(S)$ be the linear relaxation of the formulation obtained from \eqref{objective:knapsack-cover}--\eqref{Relaxation} when replacing~\eqref{Relaxation} by, for all $i \in [\eta]$, $0 \le y_i \le 1$ and 
\begin{equation*}
    y_{i} = 
        \begin{cases} 
        1 & \text{ if } i \in S,\\
        0 & \text{ if } i \in T(S). 
        \end{cases}
\end{equation*}
The proposed algorithm solves $LP(S)$ for every $S \subseteq [\eta]$ such that $|S| \le k$, and then rounds up the corresponding optimal \emph{fractional} solution. 
Finally, it outputs the best solution found. 
This procedure is formally described in Algorithm~\ref{alg:ptas}. 

\begin{algorithm}[tbh!]
  \caption{Algorithm~$\mathcal{A}_\epsilon$ for~\mkc. \label{alg:ptas}}
  \begin{algorithmic}[1]
    \Require{An instance $I=(\eta,\mu,\bar f, \bar v,\bar c, \bar w, \bar d)$ of~\mkc.}
    \Ensure{A feasible solution $(y,\alpha) \in \{0,1\}^{\eta}\times \R^\mu$ to $I$.}
    \Statex
    \Procedure{$\mathcal{A}_\epsilon$}{$I$}
    \Let{$k$}{$\min\{\eta, \: \lceil \mu /\epsilon\rceil\}$}
    \Let {$\zeta$}{$+\infty$}
    \For{$S \subseteq [\eta]$ with $|S| \le k$}\label{line:for-loop}
        \State Let $\bar d_j(S) := \bar d_j - \bar c_{j} - \sum_{i \in S} w_{ij}$  for each $j \in [\mu]$
        \If{$\sum_{i \in [\eta] \setminus (S\cup  T(S))} \bar w_{ij} \ge \bar d_j(S)$   for each $j \in [\mu]$} \Comment{Otherwise, $LP(S)$ is infeasible} \label{line:feasible}
            \State Compute an optimal \emph{basic} solution $(y'(S), \alpha'(S))$ to $LP(S)$ \label{line:lp-solver}
            \Let {$y''_i(S)$}{$\left\lceil y'_i(S) \right\rceil$ for every $i \in [\eta]$} \Comment{Rounding up to an integer solution} 
            \label{line:round-up}
            \Let{$\zeta(S)$}{$\sum_{i \in [\eta]} f_i y''_i(S) + \sum_{j \in [\mu]} \bar v_j \alpha'_j(S)$}
            \If{$\zeta > \zeta(S)$}
                \State \LetIn{$\zeta$}{$\zeta(S)$}, \quad \LetIn{$y$}{$y''(S)$}, \quad \LetIn{$\alpha$}{$\alpha'(S)$}
            \EndIf
        \EndIf
    \EndFor
    \State \Return $(y, \alpha)$

    \EndProcedure
  \end{algorithmic}
\end{algorithm}

\begin{theorem} \label{thm:PTAS-mkc}
    Let $\epsilon >0$.
    Algorithm~\ref{alg:ptas} is a $(1+\epsilon)$-approximation for \mkc\ with fixed dimension.
\end{theorem}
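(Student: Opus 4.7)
The plan is to adapt Frieze and Clarke's enumeration-plus-rounding strategy to the continuous-variable setting of \mkc. First I would verify the polynomial running time: since $k\le\lceil\mu/\epsilon\rceil$ is constant when $\mu$ and $\epsilon$ are fixed, the enumeration over $S\subseteq[\eta]$ with $|S|\le k$ produces $\bigO(\eta^k)$ subproblems, each of which requires only a single LP solve, a coordinate-wise ceiling, and a cost update.

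For the approximation guarantee, fix an optimal solution $(y^*,\alpha^*)$ of $I$ and let $S^*=\{i\in[\eta]:y^*_i=1\}$. I would focus on the iteration in which $S$ is chosen to consist of the $\min\{k,|S^*|\}$ items of $S^*$ with the largest $\bar f$-values; such an $S$ is enumerated since $|S|\le k$. By construction every $i\in S^*\setminus S$ satisfies $\bar f_i\le\min_{t\in S}\bar f_t$, which implies $T(S)\cap S^*=\emptyset$. Hence $(y^*,\alpha^*)$ is a feasible integer point of $LP(S)$, the LP relaxation is feasible, and its optimum is at most $\OPT(I)$. A standard dimensional argument---counting the $|U|+\mu$ free variables of $LP(S)$ against the binding box and covering constraints at a vertex, where $U=[\eta]\setminus(S\cup T(S))$---shows that $(y'(S),\alpha'(S))$ has at most $\mu$ strictly fractional $y$-coordinates, each corresponding to some $i\in U$ with $\bar f_i\le\min_{t\in S}\bar f_t$. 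Rounding such $y$-values up only strengthens the covering inequalities, so $(y''(S),\alpha'(S))$ is feasible and satisfies
\[
\zeta(S)\;\le\;\OPT(I)+\mu\cdot\min_{t\in S}\bar f_t.
\]

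To close the analysis, $S\subseteq S^*$ yields $\sum_{t\in S}\bar f_t\le\OPT(I)$ and hence $\min_{t\in S}\bar f_t\le\OPT(I)/|S|$. When $|S^*|\ge k$, one has $|S|=k\ge\mu/\epsilon$, so the rounding overhead is at most $(\mu/k)\OPT(I)\le\epsilon\OPT(I)$, which gives $\zeta(S)\le(1+\epsilon)\OPT(I)$. The main obstacle is the complementary regime $|S^*|<k$, in which $|S|=|S^*|$ may be too small to absorb the overhead through the simple arithmetic above. I plan to handle this case by exploiting that the algorithm enumerates $S=S^*$ exactly and invoking linear-programming duality: complementary slackness on any fractional $y_i$ together with the optimality of $(y^*,\alpha^*)$ among the integer solutions of $LP(S^*)$ should allow me to pick $(y^*,\alpha^*)$ itself as a basic optimum, so no effective rounding occurs and $\zeta(S^*)=\OPT(I)$. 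Combining the two regimes yields the claimed $(1+\epsilon)$-approximation.
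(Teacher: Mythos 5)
Your running-time analysis and your treatment of the regime $|S^*|\ge k$ are correct and essentially identical to the paper's argument: enumerate the $k$ largest-cost items of $S^*$, observe $T(S)\cap S^*=\emptyset$ so the restricted LP has value at most $\OPT(I)$, bound the number of fractional coordinates of a basic optimum by $\mu$, and charge the rounding loss $\mu\cdot\min_{t\in S}\bar f_t\le (\mu/k)\OPT(I)\le\epsilon\OPT(I)$. The gap is exactly where you suspected it: the regime $|S^*|<k$. Your proposed fix---that complementary slackness lets you take $(y^*,\alpha^*)$ itself as a basic optimum of $LP(S^*)$---fails, because $(y^*,\alpha^*)$ need not be optimal for $LP(S^*)$ at all. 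Take $\mu=1$, $\eta=2$, $\bar d=\bar c=10$, $\bar w=(9,5)$, $\bar f=(6,6)$, $\bar v=2$. The unique integer optimum is $y^*=(1,0)$, $\alpha^*=1$ with value $8$, so $S^*=\{1\}$ and $T(S^*)=\emptyset$ (since $\bar f_2\not>\bar f_1$). The unique optimum of $LP(S^*)$ is $y=(1,\tfrac15)$, $\alpha=0$ with value $7.2<8$: the LP strictly prefers covering the residual demand with a fraction of item $2$ (cost $6/5$ per unit) over paying $\bar v=2$ per unit of $\alpha$. Rounding up yields cost $12$. This is precisely the point where the continuous variables break the classical Frieze--Clarke argument: in the purely integer covering LP every free variable can be set to $0$ at optimality because it only adds cost, but the term $\bar v_j\alpha_j$ can make a fractional free item strictly profitable.

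You should also be aware that the paper's own proof dispatches this case with the single claim ``$\zeta\le\zeta(S^*)\le\zeta^*$ as $\bar f$ is nonnegative,'' which rests on the same unjustified premise; in the instance above every enumerated $S$ (there are only four) produces a rounded solution of cost $12$ against an optimum of $8$, so for $\epsilon<1/2$ the stated guarantee is not met by the algorithm as written. In other words, you have put your finger on a real difficulty rather than merely failing to reproduce a routine step. Closing it seems to require more than a duality argument: one natural repair is to have the algorithm additionally evaluate, for each enumerated $S$, the integral candidate with $y_i=1$ exactly for $i\in S$ and $\alpha_j=\max\{0,\,\bar d_j-\sum_{i\in S}\bar w_{ij}\}$ whenever this is feasible; for $S=S^*$ this recovers $\OPT(I)$ exactly and restores the claimed bound in the $|S^*|\le k$ case. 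As submitted, however, your proof (like the paper's) does not establish the theorem in that regime.
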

\begin{proof}
    The proof is very similar to the proof in                       \cite{friezeApproximationAlgorithmsMdimensional1984}. We give it here for completeness.
    
    Let $I=(\eta,\mu,\bar f, \bar v,\bar c, \bar w, \bar d)$  be an instance of~\mkc, and let $(y,\alpha)$ be the output of Algorithm~\ref{alg:ptas} on~$I$ (i.e.,~$(y,\alpha)=\mathcal{A}_\epsilon(I)$).
    Line~\ref{line:feasible} guarantees that $LP(S)$ has an optimal solution, say~$(y'(S),\alpha'(S))$, where $S\subseteq [\eta]$ with~$|S|\leq k$.
    By rounding up the entries of $y'(S)$ in line~\ref{line:round-up}, we obtain a vector satisfying constraints~\eqref{ineq:knapsack-cover-dim}--\eqref{Relaxation}.
    Thus, the algorithm produces a solution to~$I$. 
    Let us denote by $\zeta$ the value of~$(y,\alpha)$.
    Consider an optimal solution $(y^*, \alpha^*)$ to~$I$, and denote by $\zeta^*$ the optimal value of~$I$.
    Let us define $S^* = \{i \in [\eta] : y_i^* = 1\}$. 
    If $|S^*| \le k$, then we have $\zeta \le \zeta(S^*) \le \zeta^*$ as $\bar f$ is nonnegative,
    which implies $ \zeta = \zeta^*$. Since the condition $|S^*| \le k$ is always satisfied when $k = \eta$, we henceforth assume $k = \lceil \mu / \epsilon \rceil$.
    If $|S^*| > k$, then consider an ordering of~$S^* = \{i(1), \ldots, i(r)\}$ such that $r:=|S^*|$ and $\bar f_{i(1)} \ge \ldots \ge \bar f_{i(r)}$. 
    We define~$S_k^* = \{i(1), \ldots, i(k)\}$ and $\sigma = \sum_{t \in [k]} \bar f_{i(t)}$. 
    It follows from the definition of $S^*_k$ and $T(S^*_k)$  that $ \bar f_i\le \bar f_{i(k)}$ for each $i \in [\eta]\setminus (S^*_k \cup T(S^*_k))$. 
    Hence, it holds that 
    \begin{equation}\label{proof}
        \bar  f_i \le \frac{\sigma}{k} \text{ for all } i \in [\eta]\setminus (S^*_k \cup T(S^*_k)).    
    \end{equation}
    One may easily check that $(S^*\setminus S^*_k)\cap T(S^*_k) = \emptyset$.
    Thus, $(y^*, \alpha^*)$ is feasible to $LP(S^*_k)$, and so
    \[
    \zeta^* \ge \sum_{i \in [\eta]}\bar  f_i y'_i(S^*_k) + \sum_{j\in [\mu]} \bar v_j \alpha'_j(S^*_k) \ge \sum_{i \in [\eta]} \bar f_i y''_i(S^*_k) - \delta + \sum_{j \in [\mu]} \bar v_j \alpha'_j(S^*_k) \ge \zeta - \delta,
    \]
    where $\delta = \sum_{i \in D} \bar f_i$ and $D = \{i\in [\eta]: 0 < y'_i(S^*_k) < y''_i(S^*_k)\}$.
    Note that every $i \in D$ implies $y_i$ is a basic variable in $y'(S^*_k)$. 
    Thus $ |D| \le \mu$. 
    Moreover, $D \cap (S^*_k \cup T(S^*_k)) = \emptyset$, and so  we have $\bar f_i \le \sigma/ k$ for all $i \in D$ due to~\eqref{proof}. 
    Hence $\delta \le \mu\sigma/k$, and 
    \(
    \zeta^* \ge \zeta - \mu\sigma/ k \ge \zeta - \mu \zeta^*/k\) since \(\zeta^* \ge \sigma\).
    Therefore, 
    \(\zeta \le  (1+\mu/k)\zeta^* \le (1+\epsilon) \zeta^*.\)

    The algorithm's runtime is dominated by solving the linear program on line~\ref{line:lp-solver} for each of the $\bigO(\eta^{k})$ subsets $S$. \cite{cohenSolvingLinearPrograms2021} present the current state-of-the-art method to solve a linear program in $\bigO \left( M(\eta) \log \eta \log \left( \eta / 2^{-\bigO(L)} \right) \right)$ time where $M(\eta)\sim \eta^{2.38}$ is the cost of matrix multiplication and inversion, and $L = \bigO \left( \log(\eta + \norm{\Bar{d}}_{\infty} + \norm{\Bar{f}}_{\infty} + \norm{\Bar{v}}_{\infty})\right)$. A consequence of using their method is that while optimal primal and dual solutions are obtained, an optimal basis is not necessarily identified. Therefore, we apply the algorithm due to Megiddo~\cite{megiddoFindingPrimalDualOptimal1991} to compute an optimal basis in strongly polynomial time from an optimal primal-dual solution pair.

\end{proof}
\begin{theorem} \label{thm:PTAS}
    There exists a polynomial-time approximation scheme for~\p\ when $m$ is fixed.
\end{theorem}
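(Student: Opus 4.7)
The plan is to chain together the two results already developed in the paper: the decomposition of \p\ into \mkc\ instances provided by Corollary~\ref{cor:optimal}, and the approximation scheme for \mkc\ given by Theorem~\ref{thm:PTAS-mkc}. Given an instance $I$ of \p\ and $\epsilon > 0$, I would first apply the construction from Section~\ref{sec:decomposition} to build the collection $\mathcal{I}$ of $\bigO(n^m)$ \mkc\ instances, one per function $g \in \mathcal{G}$, each of dimension $\mu = m$ and with $\eta = n$ items. Note that these parameters $w^g$, $f^g$, $v^g$, $c^g$ are computable in $\bigO(nm)$ time per choice of $g$, so enumerating $\mathcal{I}$ costs $\bigO(n^{m+1}m)$ overall.

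Next I would run Algorithm~\ref{alg:ptas} with parameter $\epsilon$ on each $I' \in \mathcal{I}$, translate each returned \mkc\ solution back into a \p\ solution using the correspondence established in the proof of Proposition~\ref{prop:union-polyhedral-model} (i.e., reconstructing $x_{ij}$ from $y^g_i$, $\alpha^g_j$, and the sets $L_{g(j),j}$, $C_{g(j),j}$), and finally output the best resulting solution. Correctness of the approximation ratio would then follow in one line: if $I^\star \in \mathcal{I}$ is an instance attaining $\OPT_{\mkc}(I^\star) = \OPT(I)$ (guaranteed by Corollary~\ref{cor:optimal}), then Theorem~\ref{thm:PTAS-mkc} yields a solution to $I^\star$ of value at most $(1+\epsilon)\OPT_{\mkc}(I^\star) = (1+\epsilon)\OPT(I)$, and the output of the algorithm is at least this good.

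For the runtime, since $m$ is fixed the parameter $k = \min\{n, \lceil m/\epsilon\rceil\}$ in Algorithm~\ref{alg:ptas} is a constant depending only on $m$ and $\epsilon$, so each call runs in $\bigO(n^{\lceil m/\epsilon \rceil})$ LP solves plus the basis-recovery step of~\citet{megiddoFindingPrimalDualOptimal1991}, all polynomial in the input size. Multiplying by the $\bigO(n^m)$ calls gives a total running time that is polynomial in the size of $I$ for every fixed $m$ and $\epsilon$, which is exactly what the definition of a PTAS requires.

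I do not anticipate a real obstacle here, since the heavy lifting was done in Lemma~\ref{lemma:opt-solution} and Theorem~\ref{thm:PTAS-mkc}; the one point worth handling cleanly is verifying that the \mkc-to-\p\ translation preserves objective value exactly (which is immediate from the definitions of $f^g$, $v^g$, $w^g$ used in Proposition~\ref{prop:union-polyhedral-model}), and recording that the instances with $\OPT(I) = 0$ can be detected in $\bigO(nm)$ time and handled separately, so that we may restrict attention to instances with positive optimum as assumed throughout Section~\ref{sec:decomposition}.
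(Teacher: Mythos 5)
Your proposal is correct and follows exactly the same route as the paper's proof: decompose $I$ into the $\bigO(n^m)$ \mkc\ instances via Proposition~\ref{prop:union-polyhedral-model}, run Algorithm~\ref{alg:ptas} on each, and return the cheapest solution, with the ratio following from Corollary~\ref{cor:optimal} and Theorem~\ref{thm:PTAS-mkc}. The extra details you supply (the objective-preserving translation back to a \p\ solution and the runtime accounting) are consistent with, and merely elaborate on, the paper's two-line argument.
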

\begin{proof}
Proposition~\ref{prop:union-polyhedral-model} shows a decomposition of~\p\ into $\bigO(n^m)$ instances of the multidimensional knapsack cover problem with a single continuous variable per dimension.
One can run Algorithm~\ref{alg:ptas} on each of these instances and output the cheapest solution.
By Corollary~\ref{cor:optimal} and Theorem~\ref{thm:PTAS-mkc}, this leads to a PTAS for \p\ when $m$ is fixed.

\end{proof}

\noindent Theorem~\ref{thm:PTAS} gives, in a sense, the best polynomial-time approximation one can design to~\p\ since the multidimensional knapsack cover problem -- even without continuous variables -- does not admit any fully polynomial-time approximation scheme  unless $\Pclass = \NP$ 
\citep{magazine1984note}.

\section{A polynomial-time approximation scheme to packing problems} \label{sec:PTAS_Packing}
A PTAS for the packing variant of \p~can be derived with minor modifications to Algorithm~\ref{alg:ptas}, as shown in Algorithm~\ref{alg:ptas_mkp}. The key changes are threefold. First, $\bar d_j(S)$ is redefined as $\bar d_j(S) = \bar d_j - \bar \ell_j - \sum_{i \in S} w_{ij}$ for each $j \in [\mu]$. Second, the feasibility check becomes verifying that $\bar d_j(S) \ge 0$ for each $j \in [\mu]$. Finally, after solving the linear program $LP(S)$, its solutions must be rounded down.

\begin{algorithm}[tbh!]
  \caption{Algorithm~$\mathcal{A}_\epsilon$ for~\mkp. \label{alg:ptas_mkp}}
  \begin{algorithmic}[1]
    \Require{An instance $I=(\eta,\mu,\bar f, \bar v,\bar c, \bar w, \bar d)$ of~\mkp.}
    \Ensure{A feasible solution $(y,\alpha) \in \{0,1\}^{\eta}\times \R^\mu$ to $I$.}
    \Statex
    \Procedure{$\mathcal{A}_\epsilon$}{$I$}
    \Let{$k$}{$\min\{\eta, \: \lceil \mu /\epsilon\rceil\}$}
    \Let {$\zeta$}{$0$}
    \For{$S \subseteq [\eta]$ with $|S| \le k$}\label{line:for-loop_mkp}
        \State Let $\bar d_j(S) := \bar d_j - \bar \ell_{j} - \sum_{i \in S} w_{ij}$  for each $j \in [\mu]$
        \If{$\bar d_j(S) \ge 0$   for each $j \in [\mu]$} \Comment{Otherwise, $LP(S)$ is infeasible} \label{line:feasible_mkp}
            \State Compute an optimal \emph{basic} solution $(y'(S), \alpha'(S))$ to $LP(S)$ \label{line:lp-solver_mkp}
            \Let {$y''_i(S)$}{$\left\lfloor y'_i(S) \right\rfloor$ for every $i \in [\eta]$} \Comment{Rounding down to an integer solution} 
            \label{line:round-up_mkp}
            \Let{$\zeta(S)$}{$\sum_{i \in [\eta]} f_i y''_i(S) + \sum_{j \in [\mu]} \bar v_j \alpha'_j(S)$}
            \If{$\zeta < \zeta(S)$}
                \State \LetIn{$\zeta$}{$\zeta(S)$}, \quad \LetIn{$y$}{$y''(S)$}, \quad \LetIn{$\alpha$}{$\alpha'(S)$}
            \EndIf
        \EndIf
    \EndFor
    \State \Return $(y, \alpha)$

    \EndProcedure
  \end{algorithmic}
\end{algorithm}

The approach previously described can also be extended to yield a PTAS for a related class of problems of the following form, which we denote by~\q:
\begin{align}
        \max  \;\;  &  \sum_{i\in [n]} \sum_{j \in [m]} v_{ij} x_{ij} + \sum_{i \in [n]} \sum_{j \in [m]}f_{ij} y_{ij} & \notag\\
    \text{s.t.}\;\; &  \sum_{i \in [n]} x_{ij} \le d_j  & \forall j \in [m], \label{GAP_packing1}\\ 
    &  \sum_{j \in [m]} y_{ij} \le 1  & \forall i \in [n], \label{GAP_packing2}\\
                & \ell_{ij} y_{ij} \le x_{ij} \le c_{ij} y_{ij} & \forall i \in [n], \; j \in [m], \notag\\
                & y_{ij} \in \{0,1\} & \forall i \in [n], \; j\in [m], \notag
\end{align}
where parameters are defined as before. 

For a fixed $m$, the results from Section~\ref{sec:decomposition} can be adapted to decompose~\q~into 
$\bigO(n^m)$ 
instances of \gap\ with a single continuous variable for each dimension as follows. 
Define $\bar n = n + m$ by introducing $m$ ``dummy'' items~$\{n+1, n+2,  \ldots, n+m\}$. 
For each $i \in  [\bar n]\setminus[n]$ and $j \in [m]$, we set $\ell_{ij} = c_{ij} = v_{ij} = f_{ij} = 0$. 
These newly created items ensure the existence of solutions where no item $i \in  [n]$ is selected for a certain $j \in [m]$. The main difference compared to the approach in Section~\ref{sec:decomposition} lies in the definition of $\mathcal{G}$, here it is the set of all injective functions from $[m]$ to $[\bar n]$. For each $g \in \mathcal{G}$, the formulation for $g$ is: 
\begin{align*}
        \max  \;\;  &  \sum_{i\in [\bar n]} \sum_{j \in [m]} f^g_{ij} y^g_{ij} + \sum_{j \in [m]} v^g_{j} \alpha^g_{j} &\\
    \text{s.t.}\;\; &  \sum_{i \in [\bar n]} w^g_{ij} y^g_{ij} + \alpha^g_j \le d^g_j  & \forall j \in [m],\\
    &  \sum_{j \in [m]} y^g_{ij} \le 1  & \forall i \in [\bar n],\\
                & 0 \le \alpha^g_j \le c^g_{g(j),j}  &  \forall j \in [m],\\
    & y^g_{g(j),j} = 1 & \forall j \in [m], \\
                & y^g_{ij} \in \{0,1\} & \forall i \in [\bar n], j \in [m],
\end{align*}
where $w^g_{ij}, f^g_{ij}, v^g_{j}, d^g_j, c^g_{g(j),j}$ are defined in the same way as in Section~\ref{sec:decomposition}.

Because $m$ is assumed to be fixed, we can transform this \gap\ instance with one continuous variable per dimension into an instance of~\mkp~with some additional cardinality constraints.
We define a new set of $m n + m$ items as follows. For each original item $i \in [n]$ and each $j \in [m]$, we create a new item that has weight $w^g_{ij}$ and profit $f^g_{ij}$ for knapsack $j$, and a weight and profit equal to~$0$ for all other knapsacks $j' \in [m]\setminus\{j\}$. 
We retain the 
$m$ dummy items from the previous augmentation, each with weight $0$ and profit $0$ for all knapsacks. We additionally add a cardinality constraint for each~$i \in [n]$ to~\mkp, namely $\sum_{j \in [m]} y^g_{ij} \le 1$ for all $i \in [n]$.
Note that no cardinality constraints are added for the dummy items. 
We then apply the PTAS presented in Algorithm~\ref{alg:ptas_mkp} to this new instance. The algorithm remains unchanged, except that in the feasibility check (Step~\ref{line:feasible_mkp} of Algorithm~\ref{alg:ptas_mkp}), it additionally ensures that at most one copy of each item $i \in [n]$ is selected. 
This adjustment does not affect the algorithm's analysis. This leads us to the following analogous version of Theorem~\ref{thm:PTAS} for~\q:

\begin{theorem}
    There exists a polynomial-time approximation scheme for~\q~when $m$ is fixed.
\end{theorem}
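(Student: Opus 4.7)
The plan is to realize the reduction outlined in the preceding paragraphs. The pipeline consists of four steps: (i) decompose~\q~into $\bigO(n^m)$ sub-instances, one per injective function $g \colon [m] \to [\bar n]$, via an analogue of Lemma~\ref{lemma:opt-solution} adapted to the packing/assignment setting; (ii) encode each sub-instance as an instance of~\mkp~over $mn + m$ items together with the $n$ cardinality constraints $\sum_{j \in [m]} y^g_{ij} \le 1$ inherited from the assignment restriction of~\q; (iii) run Algorithm~\ref{alg:ptas_mkp} on each encoded instance, with its feasibility check in line~\ref{line:feasible_mkp} augmented to discard any guessed subset $S$ that contains more than one copy of the same original item; and (iv) return the best solution found over all $g \in \mathcal{G}$.

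For step~(i), the structural lemma carries over from Section~\ref{sec:decomposition} after the $L \leftrightarrow C$ swap already used for the packing version of~\p. The new ingredient is the assignment constraint $\sum_{j \in [m]} y_{ij} \le 1$: in any extreme point of the \q~polytope, the usual convex-combination exchange argument shows that no original item $i$ can be fractional in two different dimensions simultaneously, which is precisely why $\mathcal{G}$ is restricted to \emph{injective} functions from $[m]$ to $[\bar n]$. The $m$ dummy indices guarantee that such an injection always exists, and they carry no cost or weight, so $|\mathcal{G}| = \bigO(n^m)$ remains polynomial for fixed $m$. An analogue of Theorem~\ref{thm:nice-optimal} then yields a version of Corollary~\ref{cor:optimal} stating that $\OPT(I) = \max_{g \in \mathcal{G}} \OPT_g$, where $\OPT_g$ denotes the optimum of the sub-instance associated with~$g$.

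The main obstacle is step~(iii): verifying that the $(1+\epsilon)$-guarantee of Theorem~\ref{thm:PTAS-mkc} survives the presence of cardinality constraints in the LP. Feasibility of the rounded output is straightforward, because rounding values in $[0,1]$ down can only decrease each partial sum $\sum_{j} y_{ij}$, so the assignment constraints are preserved automatically. The delicate point is the error analysis, where Theorem~\ref{thm:PTAS-mkc} bounds the rounding loss $\delta = \sum_{i \in D} \bar f_i$ by exploiting $|D| \le \mu$ for the set~$D$ of fractional basic variables; with $n$ additional cardinality constraints in the LP, a naive basic-solution count yields only $|D| \le \mu + n$. My plan to salvage the bound is to guess, in the outer loop, the top-$k$ item-dimension pairs of the optimum rather than just the top-$k$ items. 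This enlarges the search to $\bigO((\eta m)^{k})$ candidates, still polynomial for fixed $m$ because $k = \lceil \mu/\epsilon \rceil$, and removes the fixed items entirely from the residual LP, so that only the $m$ knapsack constraints remain available to generate fractional basic variables. The bound $|D| \le m$ is restored, the estimate $\delta \le m\sigma/k$ is recovered, and the closing chain $\zeta \le (1 + m/k)\zeta^* \le (1+\epsilon)\zeta^*$ proceeds exactly as in Theorem~\ref{thm:PTAS-mkc}.
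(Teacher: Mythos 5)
Your pipeline is the same as the paper's: augment with $m$ dummy items, decompose over the $\bigO(n^m)$ injective guesses $g$, encode each sub-instance as an \mkp\ instance on $mn+m$ item--dimension pairs with the cardinality constraints $\sum_{j\in[m]}y_{ij}\le 1$, and run Algorithm~\ref{alg:ptas_mkp} with the feasibility check of line~\ref{line:feasible_mkp} strengthened. You are also right to flag that the error analysis of Theorem~\ref{thm:PTAS-mkc} needs revisiting once the LP acquires $n$ extra rows; the paper itself only asserts that the adjustment ``does not affect the algorithm's analysis,'' so this is the one point that genuinely requires an argument.

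However, your repair does not work as stated. In the paper's encoding the ground set of the \mkp\ instance already consists of item--dimension pairs, so guessing the top-$k$ pairs is exactly what Algorithm~\ref{alg:ptas_mkp} does; it is not a new degree of freedom. More importantly, fixing the variables indexed by $S$ does not remove the cardinality constraints of the \emph{unfixed} items from $LP(S)$: an item $i\notin S$ can still have, say, $y'_{i1}=y'_{i2}=\tfrac12$ at a vertex with its constraint $\sum_j y_{ij}\le 1$ tight, so it is false that ``only the $m$ knapsack constraints remain available to generate fractional basic variables,'' and your claimed bound $|D|\le m$ is unjustified. The gap is closed instead by the standard vertex-structure argument for assignment-type LPs: at a basic solution, any tight cardinality constraint containing a fractional variable must contain at least two of them (a lone fractional variable in a tight constraint with otherwise integral entries would itself be integral), so if $t$ is the number of such tight constraints and $F=|D|$ the number of fractional variables, then $F\ge 2t$ while linear independence of the tight rows gives $F\le m+t$; hence $t\le m$ and $|D|\le 2m$. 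Taking $k=\lceil 2m/\epsilon\rceil$ (or rescaling $\epsilon$) recovers $\delta\le 2m\sigma/k\le\epsilon\sigma$ and the $(1+\epsilon)$ guarantee, with the running time still polynomial for fixed $m$.
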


Furthermore, the same approach extends to the covering version of~\q, yielding a PTAS in the setting where \eqref{GAP_packing1} and \eqref{GAP_packing2} are $\ge$-constraints and the objective function is minimized. Moreover, by replacing $\sum_{j \in [m]} y_{ij} \le 1$  with $\sum_{j \in [m]} y_{ij} \le t_i$ for all $i \in [n]$ for a parameter $t \in [n]^n$, we obtain a more general version of \q\ which also admits a PTAS using the same core method proposed in the previous sections.
Observe that the packing version of~\p\ corresponds to the case where $t_i = n$ for all $i \in [n]$.
Interestingly, the cover version of~\q\ is not a generalization of~\p.

\section{FPTAS and approximate formulation for $m=1$}
\label{sec:fptas}
In this section, we design a \emph{fully} polynomial-time algorithm for~\p\ when~$m = 1$, denoted as \pone.
This problem generalizes the knapsack cover problem as this occurs when $\ell_i = c_i$ for all $i \in [n] $. 
Hence \pone\ is clearly $\NP$-hard.
Theorem~\ref{thm:PTAS} already proves that a PTAS exists for this problem. From the introduction we also know that Van Hoesel and Wagelmans~\cite{vanhoeselFullyPolynomialApproximation2001} present an FPTAS for the single-item capacitated economic lot-sizing problem (which includes the knapsack cover problem) that is more general than \pone~in its consideration of backlogging and holding costs. 
However, their model does not include lower bounds. This distinction in problem scope is accompanied by a methodological divergence as their approach does not rely on a decomposition into knapsack sets. The current state-of-the-art approximation algorithm for the knapsack problem is a randomized pseudo-polynomial algorithm by He and Xu~\cite{qizheng2024simple}, which achieves a time complexity of $\bigO(n^{3/2} \times \min\{w_{\max}, p_{\max}\})$ where $w_{\max}$ represents the maximum weight and $p_{\max}$ the maximum profit of an item. However, it is not obvious how the introduction of a continuous variable affects this result, a question that warrants its own investigation.

The formulation presented in Proposition~\ref{prop:union-polyhedral-model} reveals a decomposition of~\pone~into a collection of~$\bigO(n)$ instances of the knapsack cover problem with a single continuous variable, denoted by~\mkcone.
Consider an arbitrary (but fixed) $g \in [n]$.
As observed in Section~\ref{sec:decomposition}, problem \pone\ for~$g$ is equivalent to an instance~$(\eta, \bar v, \bar f, \bar w, \bar d, \bar c)$ of \mkcone, which can be modeled as
\begin{align*}
        \min  \;\;  & \sum_{i \in [\eta]} \bar f_i y_i + \bar v \alpha &\\
    \text{s.t.}\;\; & \sum_{i \in [\eta]} \bar w_{i} y_i \ge \bar d - \alpha,  & \\
                & \bar c \ge \alpha \ge 0, & \\
                & y_i \in \{0,1\} & \forall i \in [\eta]. 
\end{align*}
\noindent Note that Corollary~\ref{cor:optimal} can also be used in algorithms to solve~\pone. 
In what follows, we propose an FPTAS for~\mkcone~which is outlined in Algorithm~\ref{alg:fptas}. The core of this algorithm is an exact approach that runs in pseudo-polynomial time as outlined in Procedure~\ref{alg:name:dp} of Algorithm~\ref{alg:fptas}. 
This is a dynamic programming algorithm for~\mkcone~without a continuous variable which is essentially the same as the folklore DP for the classic knapsack (packing) problem. 
For the sake of simplicity, this algorithm computes only the value of the solution, however one may  easily modify it to store the solution itself. 
Precisely, procedure~\ref{alg:name:dp} computes, for each $i \in [\eta]$ and $j \in [\sum_{i \in [\eta]} \bar f_i]$: 
\[
M(i,j) = \max \left\{\sum_{s \in S} \bar w_s: S \subseteq \{1,\ldots,i\} \; \text{and} \; \sum_{s \in S} \bar f_s \le j \right\}.
\]
Intuitively, $M$ stands for the maximum amount that can be covered at each cost level $j$ using a subset of the first $i$ elements in~$[\eta]$. 
After computing $M$, Algorithm~\ref{alg:fptas} proceeds to compute the optimal value of the continuous variable. 

\begin{algorithm}[t!]
  \caption{Algorithm~$\mathcal{A}^1_\epsilon$ for \mkcone. \label{alg:fptas}}
  \begin{algorithmic}[1]
    \Require{An instance $I=(\eta,\bar v,\bar f,\bar w,\bar d,\bar c)$ of \mkcone.}
    \Ensure{The value of a feasible solution $(y,\alpha) \in \{0,1\}^\eta \times \R$ of $I$.}
    \Statex
    \Procedure{$\mathcal{A}^1_\epsilon$}{$I$}
    \Let{$I'=(\eta, v',f', \bar w, \bar d,\bar c)$}{\ref{alg:name:scaling}($I$)}\label{line:scaling}
    \Let{$M$}{\ref{alg:name:dp}($I'$)}
    \Let {$q$} {$\infty$}
    \State $j = \sum_{i \in [\eta]} f'_i$
    \While {$M(\eta,j) \ge \bar d - \bar c$ and $j \ge 0$}\label{line:feasibility}
        \If {$j + v'(\max \{ \bar d - M(\eta,j), 0\}) < q$}
            \State $q = j + v'(\max \{\bar d - M(\eta,j), 0\})$
            \State $j = j-1$
        \EndIf
    \EndWhile
    \Let{$(y,\alpha)$}{A feasible solution to~$I'$ of value~$q$}
    \State \Return $(y,\alpha)$
    \EndProcedure
    \Statex
    \Procedure{Scaling}{$I$}
    \funclabel{alg:name:scaling}
        \State $f_{\max} \gets \max_{i \in [\eta]} \bar{f}_i$
        \State $\lambda \gets \max \{\epsilon f_{\max}/(\eta \poly(\eta)); \: 1\}$ \label{line:lambda-def}
        \For{$i \in [\eta]$} \label{line:scaling1}
            \State $f'_i \gets \left\lceil \bar{f}_i/{\lambda} \right\rceil$ \label{line:scaling2}
            
        \EndFor
        \Let{$v'$}{$ \bar v / \lambda $} \label{line:lambda}
        \State \Return $(\eta,v',f',\bar w, \bar d, \bar c)$
    \EndProcedure
    \Statex
    \Procedure{DP}{$I$}
    \funclabel{alg:name:dp}
    \State $M(i,0) = 0$ for all $i \in [\eta]$
    \For{$i \in [\eta]$}
        \For{$j$  from $0$ to $\sum_{i \in \eta}  \bar f_i$}
            \If {$ \bar f_i > j$} 
                \Let {$M(i,j)$} {$M(i-1,j)$}
            \Else 
                \Let {$M(i,j)$} {$\max \{M(i-1, j); \: \bar w_i + M(i-1, j- \bar f_i)\}$}
            \EndIf
        \EndFor
    \EndFor
    \State \Return $M$
    \EndProcedure
  \end{algorithmic}
\end{algorithm}
\begin{theorem} \label{theorem:FPTAS}
    Let $I = (\eta, \bar v, \bar f, \bar w, \bar d, \bar c)$ be an instance of \mkcone\ with $f_{\max} \ge f_{\min} \ge  f_{\max}/\poly(\eta)$, where~$f_{\max} = \max_{i \in [\eta]} \bar f_i$ and $f_{\min} = \min\left\{ \min_{i \in [\eta]} \bar f_i, \: \bar v \bar d \right\}$, and $\poly(\eta)\ge 1$ is a polynomial function on $\eta$.
    Algorithm~\ref{alg:fptas} is a $(1+\epsilon)$-approximation that runs in 
    $\bigO\left(\eta^3 \poly(\eta)/\epsilon\right)$ time, for any $\epsilon>0$. 
\end{theorem}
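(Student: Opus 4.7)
The plan is to split the argument into correctness (the $(1+\epsilon)$-approximation guarantee) and the runtime bound, following the classical scaling-based FPTAS template. First, I would check that procedure \textsc{DP} together with the while-loop solves the scaled instance $I'=(\eta, v', f', \bar w, \bar d, \bar c)$ \emph{exactly}. The recursion stored in $M$ is the textbook knapsack DP, so $M(\eta,j)$ equals the maximum total weight achievable by an item set whose scaled cost is at most $j$. For any fixed integer-cost budget $j$, the optimal value of the continuous variable is $\alpha=\max\{\bar d-M(\eta,j),0\}$ whenever $M(\eta,j)\ge \bar d-\bar c$; sweeping $j$ and keeping the best value of $j+v'\max\{\bar d-M(\eta,j),0\}$ therefore yields the optimal scaled cost~$q^{*}$.

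Next, I would translate $q^{*}$ back to the original scale. The chain $\bar f_i/\lambda \le \lceil \bar f_i/\lambda\rceil \le \bar f_i/\lambda+1$ together with $v' = \bar v/\lambda$ implies, for every feasible $(y,\alpha)$,
\[
    \sum_{i\in[\eta]} \bar f_i y_i + \bar v\alpha \;\le\; \lambda\Bigl(\sum_{i\in[\eta]} f'_i y_i + v'\alpha\Bigr) \;\le\; \sum_{i\in[\eta]} \bar f_i y_i + \bar v\alpha + \lambda\eta.
\]
Applying the right inequality to the original optimum $(y^{*},\alpha^{*})$ and the left inequality to the algorithm's output gives original cost at most $\OPT + \lambda\eta$. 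If $\lambda=1$ the DP is exact and we are done; otherwise $\lambda\eta = \epsilon f_{\max}/\poly(\eta)$, and it suffices to show $\OPT \ge f_{\max}/\poly(\eta)$. I would observe that any feasible solution either activates some $y_i$ (contributing at least $\min_{i}\bar f_i$) or, in the degenerate case $\bar c=\bar d$, relies solely on the continuous variable (contributing $\bar v\bar d$); hence $\OPT \ge f_{\min} \ge f_{\max}/\poly(\eta)$ by hypothesis, which closes the approximation argument.

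Finally, for the runtime, the dominant cost is the DP table, filled in $\bigO(\eta \sum_{i} f'_i)$ time. Using $f'_i \le \bar f_{\max}/\lambda + 1$ and the definition of $\lambda$ in line~\ref{line:lambda-def}, one obtains $\sum_{i} f'_i = \bigO(\eta\poly(\eta)/\epsilon)$, so the overall runtime is $\bigO(\eta^3\poly(\eta)/\epsilon)$; the scaling pass and the while-loop contribute the same order. The main obstacle is really the lower bound $\OPT \ge f_{\max}/\poly(\eta)$: this is where the somewhat unusual assumption $f_{\min} \ge f_{\max}/\poly(\eta)$ does its work, and some care is needed to ensure the case analysis (integer items versus the continuous variable) genuinely covers every feasible solution under the inherited assumption $\bar c \le \bar d$.
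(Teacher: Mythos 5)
Your proposal is correct and follows essentially the same route as the paper's proof: exactness of the DP plus the $j$-sweep on the scaled instance, the two-sided scaling inequalities $\bar f_i/\lambda \le f'_i \le \bar f_i/\lambda + 1$ with $v'=\bar v/\lambda$ to bound the loss by $\lambda\eta = \epsilon f_{\max}/\poly(\eta)$, the lower bound $\OPT \ge f_{\min} \ge f_{\max}/\poly(\eta)$ (the paper states this without your explicit case split between activating some $y_i$ and covering $\bar d$ purely with $\alpha$, which is a welcome extra detail), and the same $\bigO(\eta^2 f'_{\max})$ runtime accounting.
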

\begin{proof}
    We first analyze the algorithm supposing that no scaling is performed in line~\ref{line:scaling}, that is, $I'=I$.
    Let $q$ be the value produced by $\mathcal{A}^1_\epsilon$ on input $I$.
    It follows from the computation of $M$ by procedure~\ref{alg:name:dp} and the conditions in line~\ref{line:feasibility} that $q$ corresponds to the value of a feasible solution of~$I$.
    Let $(y^*, \alpha^*)$ be an optimal solution for~$I$ with $\alpha^*$ integer.
    The existence of this solution is guaranteed since $\bar w, \bar d$ and $\bar c$ are all integer. 
    Let $j^* = \sum_{i\in [\eta]}  \bar f_i y^*_i$. Note that $M(\eta,j^*) \ge \sum_{i \in [\eta]} \bar w_i y^*_i \ge \bar d - \alpha^*$. 
    Hence, $\alpha^* \ge \bar d - M(n,j^*)$. 
    Therefore, we have $q \le j^* + \bar v(\bar d - M(n,j^*)) \le \sum_{i \in [\eta]} \bar f_i y^*_i + \bar v \alpha^*=\OPT(I)$.
    Clearly, the running time of this exact algorithm depends on $\bar f$.
    This explains why we first scale down the input using procedure~\ref{alg:name:scaling} in line~\ref{line:scaling} before executing the pseudopolynomial-time algorithm to solve the problem.
    As we shall see, this transformation guarantees that the algorithm runs in polynomial time on the size of $I$ if $f^{\max}/f^{\min}$ is bounded by a polynomial function on $\eta$. Let $I'$ denote the transformed instance in line~\ref{line:scaling}.
    Let~$q$ be the output of Algorithm~\ref{alg:fptas} on input $I$, and let~$(y,\alpha) \in \{0,1\}^\eta \times \R$ be a feasible solution of $I$ of value~$q$.
    By the construction, $(y,\alpha)$ is optimal for~$I'$.
    Note that if~$\lambda=1$, then this solution is also optimal for $I$.
    Hereafter, assume that $\lambda=\epsilon  f_{\max}/(\eta \poly(\eta))>1$.
    We define $Q=\{ i \in [\eta] : y_i=1\}$ and 
    $Q^*=\{ i \in [\eta] : y^*_i=1\}$. 
    First, note that $(y^*,\alpha^*)$ is also a feasible solution for $I'$, and so 
    $\sum_{i \in Q^*} f'_i +   v' \alpha^* \ge \sum_{i \in Q} f'_{i} + v' \alpha =\OPT(I')$.
    It is clear from lines~\ref{line:lambda-def}--\ref{line:lambda} that $\frac{\bar f_i}{\lambda} \le f'_i \le \frac{\bar f_i}{\lambda} + 1$ for every $i\in [\eta]$ and $\frac{\bar v}{\lambda} = v'$. Hence, the following sequence of inequalities hold for the solution $(y, \alpha)$ produced by Algorithm~\ref{alg:fptas}:
    \begin{align*}
        \sum_{i \in Q} \bar f_i + \bar v \alpha &  \le \lambda \sum_{i \in Q} f'_i + \lambda  v' \alpha = \lambda \OPT(I')\\
        & \le \lambda \sum_{i \in Q^*} f'_i + \lambda v' \alpha^* \le \lambda \sum_{i \in Q^*} \left( \frac{\bar f_i}{\lambda} + 1 \right) + \lambda \left( \frac{\bar v}{\lambda}\right) \alpha^* \\
        & = \OPT(I) + \lambda |Q^*| \le \OPT(I) + \dfrac{\epsilon  f_{\max}}{ \poly(\eta)} \\
        & \leq (1+\epsilon)\OPT(I),
    \end{align*}
    where the last inequality holds because $\OPT(I) \ge f_{\min} \ge f_{\max}/\poly(\eta)$.
    
    The running-time of Algorithm~\ref{alg:fptas} on input $I'$ takes $\bigO(n^2 f'_{\max})$ time , with $f'_{\max} = \max_{i \in [\eta]} f'_i$. 
    It follows from the scaling in lines~\ref{line:lambda-def}-\ref{line:scaling2} that $f'_{\max} \le f_{\max}/\lambda + 1 \le \eta \poly(\eta)/\epsilon + 1$. Therefore, Algorithm~\ref{alg:fptas} runs in~$\bigO\left(\left(\eta \poly(\eta)/\epsilon + 1\right) \eta^2\right)$ time.

\end{proof}
\noindent The hypothesis in Theorem~\ref{theorem:FPTAS} on the polynomial bound on the ratio $f_{\max}/f_{\min}$ is due to fact that the optimal value of an instance of \mkcone\  can be arbitrarily smaller than $f_{\max}$ in general.
As an example, consider an instance of \mkcone\ with $\eta=2$, $\bar d =1$, $f_1=2$, $w_1=1$, $f_2=100$, $w_2=100$, and $\bar v=100$. Here, $f_{\max} = 100$ while the optimal value equals~$2$. A key direction for future research is to eliminate the assumption on the polynomial bound on the ratio $f_{\max}/f_{\min}$. We hypothesize that an approach similar to that of \cite{bienstockTighteningSimpleMixedinteger2012} could be adapted for this purpose, but formally establishing this remains an open challenge.

For problem \pone, Theorem~\ref{theorem:FPTAS} yields the following result. 

\begin{theorem}
There exists a fully polynomial-time approximation scheme for \p~when $m=1$, assuming $f_{\max} \ge f_{\min} \ge  f_{\max}/\poly(n)$, where~$f_{\max} = \max_{i \in n} f_i + c_i v_i$ and $f_{\min} = \min_{i \in n} f_i + \ell_i v_i$, and $\poly(n)\geq 1$ is any polynomial function on $n$. 
\end{theorem}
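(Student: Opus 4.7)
The plan is to combine the decomposition of Corollary~\ref{cor:optimal} with the FPTAS for \mkcone\ given by Algorithm~\ref{alg:fptas}. For $m=1$, Corollary~\ref{cor:optimal} produces a collection $\mathcal{I}$ of $\bigO(n)$ instances of \mkcone\ whose minimum optimal value equals $\OPT(I)$. The proposed scheme runs Algorithm~\ref{alg:fptas} on each $I' \in \mathcal{I}$ and returns the cheapest output. What remains is to verify that every $I' \in \mathcal{I}$ satisfies the polynomial-ratio hypothesis of Theorem~\ref{theorem:FPTAS} and that the overall runtime is polynomial in $n$ and $1/\epsilon$.

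Fix $g \in [n]$ and consider the associated sub-instance, whose coefficients are $\bar f^g_i = f_i + v_i w^g_i$ with $w^g_i \in \{\ell_i, c_i\}$ for $i \neq g$ and $w^g_g = \ell_g$, together with $\bar v^g = v_g$, $\bar d^g = d$, and $\bar c^g = c_g - \ell_g$. Since $\ell_i \le w^g_i \le c_i$, we have
\[
f_i + \ell_i v_i \;\le\; \bar f^g_i \;\le\; f_i + c_i v_i \qquad \text{for every } i \in [n].
\]
Consequently $\bar f^g_{\max} \le f_{\max}$ and $\min_i \bar f^g_i \ge f_{\min}$, with $f_{\max}$ and $f_{\min}$ as in the current statement. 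The delicate point is that the quantity denoted $f_{\min}$ inside Theorem~\ref{theorem:FPTAS} is the minimum of $\min_i \bar f^g_i$ and $\bar v^g \bar d^g = v_g d$, and the latter term could be arbitrarily small (for instance when $v_g=0$). However, since $y_g = 1$ is enforced in the sub-instance, every feasible solution pays at least $\bar f^g_g = f_g + v_g \ell_g \ge f_{\min}$, so $\OPT_{\mkcone}(I') \ge f_{\min} \ge f_{\max}/\poly(n) \ge \bar f^g_{\max}/\poly(n)$. This is precisely the inequality invoked at the end of the proof of Theorem~\ref{theorem:FPTAS}, and the same scaling argument therein produces a $(1+\epsilon)$-approximation for $I'$ in $\bigO(n^3 \poly(n)/\epsilon)$ time.

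The main obstacle is to incorporate the fixing $y_g = 1$ cleanly into Algorithm~\ref{alg:fptas}, which is written for unconstrained \mkcone. This is handled by a small reduction: add $\bar f^g_g$ as a constant to the objective, subtract $\bar w^g_g = \ell_g$ from the demand $\bar d^g$, and run Algorithm~\ref{alg:fptas} on the remaining $n-1$ items with the continuous variable kept intact; the bound $\OPT_{\mkcone}(I') \ge f_{\min}$ still carries through because the constant added to the objective already accounts for item~$g$. Taking the best over all $g$ and invoking Corollary~\ref{cor:optimal} yields a solution of value at most $(1+\epsilon)\OPT(I)$. The total runtime is $\bigO(n \cdot n^3 \poly(n)/\epsilon) = \bigO(n^4 \poly(n)/\epsilon)$, which is polynomial in $n$ and $1/\epsilon$, establishing the FPTAS.
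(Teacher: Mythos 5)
Your proposal is correct and follows essentially the same route as the paper: decompose \pone\ via Corollary~\ref{cor:optimal} into $\bigO(n)$ instances of \mkcone, run Algorithm~\ref{alg:fptas} on each, return the cheapest solution, for a total runtime of $\bigO(n^4\poly(n)/\epsilon)$. In fact you are more careful than the paper on one point it dismisses with ``it is clear'': you correctly note that the term $\bar v\bar d$ in the $f_{\min}$ of Theorem~\ref{theorem:FPTAS} can be arbitrarily small for a sub-instance, and you patch this by observing that the forced choice $y_g=1$ already guarantees $\OPT_{\mkcone}(I')\ge f_{\min}\ge f_{\max}/\poly(n)$, which is the only place that hypothesis is actually used in the proof of Theorem~\ref{theorem:FPTAS}.
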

\begin{proof}
    Proposition~\ref{prop:union-polyhedral-model} shows a decomposition of \pone~into $\bigO(n)$ instances of the knapsack cover problem with a single continuous variable.
    One can run Algorithm~\ref{alg:fptas} on each of these instances and output the cheapest solution.
    By Corollary~\ref{cor:optimal} and Theorem~\ref{theorem:FPTAS}, this leads to an FPTAS for \pone~when for every resulting knapsack cover problem with a single continuous variable, $f_{\max} \ge f_{\min} \ge  f_{\max}/\poly(n)$, where~$f_{\max} = \max_{i \in [\eta]}  \bar f_i$ and $f_{\min} = \min\left\{ \min_{i \in [\eta]}  \bar f_i, \:  \bar v  \bar d \right\}$, and $\poly(\eta)\geq 1$ is any polynomial function on $\eta$. It is clear that this is guaranteed by the assumption on the input data of~\p.  
    The total running time of this algorithm is~$\bigO(n^4 \poly(n)/\epsilon)$.

\end{proof}

By utilizing a similar algorithm as~$\mathcal{A}^1_\epsilon$, it is possible to obtain an analogous approximation result for the packing version of~\pone. Furthermore, in the packing case, the assumption of a polynomially bounded ratio $f_{\max}/f_{\min}$ is unnecessary, 
as the optimal value is at most $n f_{\max}$.

An important consequence of Theorem~\ref{thm:PTAS} and the equivalence of optimization and separation is that an $\epsilon$-approximate formulation must exist for \pone. Since Proposition~\ref{prop:union-polyhedral-model} decomposes \pone~into $\bigO(n)$ instances of~\mkcone, we can directly leverage the results of \cite{bienstockTighteningSimpleMixedinteger2012}, who obtain a provably tight, polynomially large formulation for the standard knapsack cover problem. 
We extend their methodology to incorporate a continuous variable, defining a linear programming formulation whose optimal value is at most $(1+\epsilon)$ times the optimal value of~\mkcone.

\noindent Leveraging this building block, we immediately obtain the following result for \pone.

\begin{theorem}  
    Let $\epsilon \in (0,1)$, and let $I$ be an instance of~\pone. 
    There is a linear programming formulation~$A^Iz\le b^I$ with 
    $\bigO\left(\epsilon^{(\log \epsilon/ \log (1+\epsilon)) - 1}n^3 \right)$
    variables and constraints, and objective function~$h^I$ such that
    \[\OPT(I) \le (1+\epsilon)\min\{h^I(z) :  A^Iz\le b^I\}.\]
\end{theorem}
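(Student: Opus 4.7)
The plan is to combine two ingredients: the decomposition of~\pone~into $\bigO(n)$ instances of~\mkcone~from Proposition~\ref{prop:union-polyhedral-model} and Corollary~\ref{cor:optimal} (specialized to $m=1$), and the $(1+\epsilon)$-approximate compact formulation for~\mkcone~obtained by extending the construction of~\citet{bienstockTighteningSimpleMixedinteger2012}, whose existence is recalled in the paragraph preceding the statement. For each $g \in [n]$, let $I^g$ denote the \mkcone~instance associated with~$g$ and let $\{A^g z^g \le b^g,\; h^g(z^g)\}$ be its $(1+\epsilon)$-approximate LP, so that $\OPT_{\mkcone}(I^g) \le \min\{h^g(z^g) : A^g z^g \le b^g\} \le (1+\epsilon)\OPT_{\mkcone}(I^g)$. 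The size of each such formulation is $\bigO(\epsilon^{(\log\epsilon/\log(1+\epsilon))-1} n^2)$ variables and constraints.

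Second, I would assemble these $n$ pieces into a single LP by replacing each integer subproblem~\eqref{Decomposition1}--\eqref{Decomposition2} in the union formulation~\eqref{Multi1}--\eqref{Multi9} (with $m=1$) by its Bienstock-type approximation $\{A^g z^g \le b^g\}$, while retaining the linking variables $z^g \in [0,1]$ with $\sum_{g \in [n]} z^g = 1$ and the coupling $A^g z^g \le b^g z^g$ (scaled so the pieces vanish when $z^g = 0$). The objective is the natural sum $\sum_{g \in [n]} h^g(z^g)$. This is precisely Balas' extended formulation for the convex hull of the union of the $n$ approximate polyhedra; since a linear objective over a convex hull of polyhedra attains its optimum at a vertex, which lies in one of the pieces, the LP value equals $\min_{g \in [n]} \min\{h^g(z^g) : A^g z^g \le b^g\}$. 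By the per-piece guarantee and Corollary~\ref{cor:optimal}, this is bounded above by $(1+\epsilon)\min_{g \in [n]}\OPT_{\mkcone}(I^g) = (1+\epsilon)\OPT(I)$, while trivially bounded below by $\OPT(I)$, giving the desired inequality.

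For the size bound, the disjunctive union multiplies the per-piece count by the number of pieces: $\bigO(n)$ copies times $\bigO(\epsilon^{(\log\epsilon/\log(1+\epsilon))-1} n^2)$ variables and constraints per copy, plus $\bigO(n)$ linking constraints, yielding the claimed $\bigO(\epsilon^{(\log\epsilon/\log(1+\epsilon))-1} n^3)$ total. Computing the data $(\bar f^g, \bar v^g, \bar w^g, \bar d, \bar c^g_{g,1})$ of each $I^g$ requires only the $L/C$ partitions introduced before Theorem~\ref{thm:nice-optimal}, which are obtained in linear time.

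The main obstacle is the building block itself: the extension of Bienstock's polynomially-sized tight formulation for the classic knapsack cover to \mkcone~with a continuous variable. One has to verify that the continuous term $\bar v \alpha$, together with the shifted right-hand side $\bar d - \alpha$ in the covering constraint, can be absorbed into Bienstock's disjunctive-programming / item-bucketing scheme without blowing up its $\bigO(\epsilon^{(\log\epsilon/\log(1+\epsilon))-1} n^2)$ size; intuitively, one introduces an auxiliary "bucket" for the continuous slack whose cost is $\bar v$ per unit and whose capacity is $\bar c$, and then applies Bienstock's geometric grouping of item costs verbatim. Once this extension is in place, the lifting from \mkcone\ to \pone\ via the disjunctive union is routine, as described above.
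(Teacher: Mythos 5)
Your proposal follows essentially the same route as the paper: decompose \pone\ into $\bigO(n)$ instances of \mkcone\ via Proposition~\ref{prop:union-polyhedral-model} and Corollary~\ref{cor:optimal}, equip each instance with a Bienstock-style $(1+\epsilon)$-approximate formulation of size $\bigO\left(\epsilon^{(\log\epsilon/\log(1+\epsilon))-1}n^2\right)$ (the paper's appendix handles the continuous variable by simply carrying $\alpha$ through each piece $P^{h,\sigma}$ and keeping it fixed during the rounding, much as you suggest), and glue the pieces together with a disjunctive-union formulation, which multiplies the sizes to give the $\bigO\left(\epsilon^{(\log\epsilon/\log(1+\epsilon))-1}n^3\right)$ bound. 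One small correction: the per-piece guarantee should read $\min\{h^g(z^g):A^gz^g\le b^g\}\le\OPT_{\mkcone}(I^g)\le(1+\epsilon)\min\{h^g(z^g):A^gz^g\le b^g\}$ --- an LP relaxation lower-bounds the integer optimum, not the reverse as written --- but the inequality you actually need, $\OPT(I)=\min_{g}\OPT_{\mkcone}(I^g)\le(1+\epsilon)\min_{g}\min\{h^g(z^g):A^gz^g\le b^g\}$, survives this fix and is exactly how the paper concludes.
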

\begin{proof}[Proof sketch]
    Proposition~\ref{prop:union-polyhedral-model} shows a decomposition of \pone~into $\bigO(n)$ instances of the knapsack cover problem with a single continuous variable. 
    For each of these problems, one may create an $\epsilon$-approximate formulation with $\bigO\left(\epsilon^{(\log \epsilon/ \log (1+\epsilon)) - 1}n^2 \right)$
    variables and constraints, and combine them as in the formulation presented in Section~\ref{sec:decomposition}.
    This leads to an $\epsilon$-approximate formulation for \pone~with 
    $\bigO\left(\epsilon^{(\log \epsilon/ \log (1+\epsilon)) - 1}n^3 \right)$
    variables and constraints.

\end{proof}

The proof of the $\epsilon$-approximate formulation for the knapsack cover problem with a single continuous variable can be found in the appendix as the approach and structure adheres closely to what is shown by Bienstock and McClosky~\cite{bienstockTighteningSimpleMixedinteger2012}. 

The existence of an $\epsilon$-approximate formulation for the packing version of~\pone~is guaranteed in principle by the equivalence of optimization and separation. However, such a formulation is not immediately available. A key complication arises from the presence of a continuous variable after the decomposition of~\pone. While Bienstock~\cite{bienstockApproximateFormulations012008} successfully derived an approximate formulation for the standard 0-1 knapsack set, it is unclear how to adapt the techniques to handle an additional continuous variable. Resolving this gap, and thus developing an explicit $\epsilon$-approximate formulation for the packing variant of~\pone, is a significant open challenge.

\section{{Perfect compact formulation for $m = 1$ and uniform bounds}} \label{Section:extended}
In this section, we present a compact and tight formulation for~\p~when $m = 1$ and the bounds are uniform, that is, $\ell_{i} = \ell$ and $c_{i} = c$ for all $i \in [n]$. Note that each item $i \in [n]$ still has objective function values $v_i$ and $f_i$.
This special case is denoted as~\pu. \pu~can be solved in polynomial time~\citep{hellionPolynomialTimeAlgorithm2012a}. 
However, no tight formulation of the problem is known. 
Hence the proposed perfect compact formulation contributes in this direction.

Constantino~\cite{constantinoLowerBoundsLotSizing1998a} shows that solving~\pu, where $\ell_i$ can vary with $i \in [n]$, is in general $\NP$-hard. This paper also analyzes a setting identical to~\pu~and obtains two families of valid inequalities. Using both families, it is possible to describe the convex hull when $ c \ge d$. \cite{agraPolyhedralDescriptionInteger2006a} study the case where the $y$ variables are general integers and unbounded.

Dash et al.~\cite{dashContinuousKnapsackSet2016a} study the convex hull of the continuous knapsack set. Among other results, they prove that the  convex hull with two integer and one bounded continuous variables is described using inequalities for a  knapsack set with two integer variables and  the ones for a continuous knapsack set with two integer and one continuous variables that is unbounded. We use a special case of this result below. 

We first present a result on the the convex hull of a simpler set. For two integers $a$ and $b$, we use $[a,b]$ to denote the set of integers $\{a, \ldots, b\}$ if $a\leq b$; otherwise $[a,b]=\emptyset$.

\begin{lemma} \label{lemma:convhull}
 Let $$Y=\{(\alpha, \psi)\in \R_{\ge} \times \{0,1\}^{\nu}: \alpha + \sum_{i\in [\nu]} \psi_i \geq \delta, \alpha \leq \sigma\},$$ where $\delta$ and $\nu$ are positive, $\nu$ is an integer such that $\nu\geq \lceil \delta - \sigma \rceil$ and $\sigma\leq 1$. 
 The description of the convex hull of~$Y$ is given by the trivial inequalities,
\begin{align}
 & \sum_{i\in [\nu]} \psi_i \geq \lceil \delta - \sigma \rceil,\label{y1} \text{ and }\\
 & \alpha \geq (\delta - \lfloor \delta \rfloor) \left (\lceil \delta \rceil - \sum_{i\in [\nu]} \psi_i \right). \label{y4} 
\end{align}
   \end{lemma}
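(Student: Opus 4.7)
The plan is to prove that the polytope $P$ defined by $0 \leq \psi_i \leq 1$ for $i \in [\nu]$, together with $0 \leq \alpha \leq \sigma$, \eqref{y1}, and \eqref{y4}, coincides with $\conv(Y)$. The easy inclusion $\conv(Y) \subseteq P$ follows from the validity of \eqref{y1} and \eqref{y4} on $Y$: inequality \eqref{y1} is immediate from $\sum_i \psi_i \geq \delta - \alpha \geq \delta - \sigma$ together with the integrality of $\sum_i \psi_i$; for \eqref{y4}, writing $s = \sum_i \psi_i$ and $f = \delta - \lfloor \delta \rfloor$, the case $s \geq \lceil \delta \rceil$ makes the right-hand side nonpositive, while integrality forces $s \leq \lfloor \delta \rfloor$ whenever $s < \lceil \delta \rceil$, and then $\alpha \geq \delta - s \geq f(\lceil \delta \rceil - s)$ after a direct rearrangement.

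For the reverse inclusion the goal is to show that every vertex of $P$ has $\psi \in \{0,1\}^{\nu}$. The key structural observation is that every constraint of $P$ other than the individual box constraints on each $\psi_i$ depends on $\psi$ only through the sum $s := \sum_{j \in [\nu]} \psi_j$. So if a vertex $(\alpha^*, \psi^*)$ had two coordinates $\psi_i^*, \psi_{i'}^*$ strictly in $(0,1)$, I would perturb to $(\psi_i^* \pm \epsilon, \psi_{i'}^* \mp \epsilon)$ with $\epsilon > 0$ small; the sum $s$ and all non-box constraints are preserved, yielding two feasible points whose midpoint is $(\alpha^*, \psi^*)$, contradicting extremality. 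Hence at most one coordinate $\psi_i^*$ is strictly fractional at any vertex of $P$.

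Assume for contradiction that exactly one such $\psi_i^*$ is fractional. Then $\nu - 1$ box constraints on the remaining $\psi_j$ are tight, so two additional linearly independent tight constraints are needed among $\alpha = 0$, $\alpha = \sigma$, the tightness of \eqref{y1}, and the tightness of \eqref{y4}. Any pair that includes tightness of \eqref{y1} forces $s = \lceil \delta - \sigma \rceil \in \Z$, and since $\sum_{j \neq i} \psi_j^* \in \Z$, this forces $\psi_i^* \in \Z$, a contradiction. The pair consisting of $\alpha = 0$ and tightness of \eqref{y4} either collapses (when $\delta \in \Z$, since then \eqref{y4} reduces to $\alpha \geq 0$) or forces $s = \lceil \delta \rceil \in \Z$, again a contradiction. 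The main obstacle is the remaining pair $\alpha = \sigma$ and tightness of \eqref{y4}, which yields $s = \lceil \delta \rceil - \sigma/f$ with $f > 0$; here I would split on whether $\sigma \leq f$ or $\sigma > f$, use $\sigma \leq 1$ to evaluate $k = \lceil \delta - \sigma \rceil$ explicitly in each regime, and check that either \eqref{y1} is simultaneously tight (forcing $s$ integer, hence $\psi_i^* \in \Z$) or $s < k$ (violating \eqref{y1}, so the point is not in $P$). This exhausts all cases and shows that every vertex of $P$ lies in $Y$, giving $P \subseteq \conv(Y)$.
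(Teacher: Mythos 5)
Your overall strategy is sound and is genuinely different from the paper's: the paper invokes Theorem~4.4 of Dash et al.\ as a black box to describe the aggregated set $\conv\{(\alpha,\gamma)\in\R_\ge\times\Z_\ge:\alpha+\gamma\ge\delta,\ \alpha\le\sigma\}$ and only argues the last step (a fractional $\psi$ at an extreme point would have two fractional entries, which your exchange perturbation also rules out), whereas you reprove the aggregated statement from scratch by enumerating pairs of tight constraints at a putative fractional vertex. That self-contained route is perfectly legitimate, and your disposal of the pairs involving \eqref{y1}, of $\{\alpha=0,\ \eqref{y4}\}$, and (once the computation $s=\lceil\delta\rceil-\sigma/f$ is carried out against $\lceil\delta-\sigma\rceil$) of $\{\alpha=\sigma,\ \eqref{y4}\}$ all check out.

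There is, however, a concrete gap in how you define $P$: you omit the defining inequality $\alpha+\sum_{i\in[\nu]}\psi_i\ge\delta$, which is one of the ``trivial inequalities'' the lemma refers to, and without it the claimed identity $P=\conv(Y)$ is false. Take $\delta\in\Z_{>0}$ and $\sigma=1$: then $f=0$, so \eqref{y4} degenerates to $\alpha\ge 0$, and \eqref{y1} only gives $\sum_i\psi_i\ge\delta-1$; the point with $\alpha=0$ and $\sum_i\psi_i=\delta-1$ satisfies every inequality in your $P$ yet violates $\alpha+\sum_i\psi_i\ge\delta$, so it lies outside $\conv(Y)$ and your claim that every vertex of $P$ lies in $Y$ fails. (For non-integer $\delta$ the omission happens to be harmless, since \eqref{y1}, \eqref{y4} and $\sigma\le 1$ together imply $\alpha+s\ge f\lceil\delta\rceil+(1-f)\lfloor\delta\rfloor=\delta$, but the lemma does not exclude integer $\delta$.) The fix is to put $\alpha+\sum_i\psi_i\ge\delta$ back into $P$ and extend your case analysis to the tight-constraint pairs that involve it: $\{\alpha=0,\ \alpha+s=\delta\}$ gives $s=\delta$ (infeasible against \eqref{y4} unless $\delta\in\Z$), $\{\alpha=\sigma,\ \alpha+s=\delta\}$ forces $\delta-\sigma\in\Z$ via \eqref{y1}, and $\{\alpha+s=\delta,\ \eqref{y4}\text{ tight}\}$ yields $s=\lfloor\delta\rfloor$; in every case $s$ is integral, so the argument closes. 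You should also state explicitly the computation in your final case rather than deferring it, since that is where the hypothesis $\sigma\le 1$ is actually used.
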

\begin{proof}
By Theorem 4.4.~in \cite{ dashContinuousKnapsackSet2016a}, we know that $\conv\{(\alpha, \gamma)\in \R_{\ge} \times \Z_{\ge}: \alpha + \gamma \geq \delta, \alpha \leq \sigma\}$ is the intersection of three sets, namely, $\conv\{(\alpha, \gamma)\in \R_{\ge} \times \Z_{\ge}: \alpha + \gamma \geq \delta\}$, $\{(\alpha, \gamma)\in \R\times \R: \alpha \leq \sigma\}$ and $\conv\{(\alpha, \gamma)\in \R \times \Z_{\ge}:  \gamma \geq \delta - \sigma\}$. In the nontrivial case, the mixed integer rounding inequality $\alpha \geq (\delta - \lfloor \delta \rfloor)(\lceil \delta \rceil - \gamma)$ is needed to describe the first set, and the integer rounding inequality $\gamma \geq \lceil \delta - \sigma \rceil$ is needed to describe the third set \citep[see, e.g.,][]{wolsey2020integer}. Adding an upper bound of $\nu$ on the integer variable $\gamma$ does not create fractional extreme points. 

Now we know that in every extreme point of the set defined by the trivial inequalities and the valid inequalities \eqref{y4} and \eqref{y1}, $\sum_{i = 1}^{\nu} \psi_i$ is integer. If $\psi$ is fractional, then it has at least two entries that are fractional and  such a point cannot be an extreme point.

\end{proof}
 
A perfect formulation for~\pu~is obtained by integrating the decomposition into knapsack cover problems with one continuous variable presented in Section~\ref{sec:decomposition}, and the union-of-polyhedra approach to the convex hull of sets described in Lemma~\ref{lemma:convhull}.

\begin{theorem} \label{theorem:perfect_formulation}
    \pu~admits the following perfect formulation with $\bigO(n^3)$ variables and $\bigO(n^3)$ constraints:

\begin{align*}
        \min  \;\;  &  \sum_{i \in [n]} v_{i} x_{i} + \sum_{i \in [n]} f_i y_i  \\
    \text{\rm s.t.} \;\; & x_{i}=\sum_{g =i+1}^n \sum_{b\in B^g} \ell y_i^{gb} + \sum_{b \in B^i} x_i^{ib} + \sum_{g=1}^{i-1}\sum_{b\in B^g} c y_i^{gb}  & \forall i \in [n], \\
                & y_{i}=\sum_{g \in \mathcal{G}} \sum_{b\in B^g} y_i^{gb}  &  \forall i \in [n], \\
             & \sum_{i=g+1}^n y^{gb}_i \ge \left\lceil \frac{d - b \ell - c}{c}\right\rceil z^{gb} & \forall g \in \mathcal{G}, b \in B^g, \\
               & x_g^{gb} +  \sum_{i=g+1}^n \left(d - (b+1) \ell - \left\lfloor \frac{(d - (b+1) \ell}{c}\right\rfloor c \right)             y^{gb}_i \geq & \\
               & \left( \ell + \left\lceil \frac{d - (b+1) \ell}{c}\right\rceil \times \left(d - (b+1) \ell - \left\lfloor \frac{(d - (b+1) \ell}{c}\right\rfloor c \right) \right) z^{gb}  &  \forall g \in \mathcal{G}, b \in B^g, \\
               & x_g^{gb} + \sum_{i=g+1}^n c y^{gb}_i \ge (d - b \ell)z^{gb} &  \forall g \in \mathcal{G}, b \in B^g, \\
                & \sum_{g \in \mathcal{G}} \sum_{b \in B^g} z^{gb} = 1,   \\
                & \sum_{i=1}^{g-1} y_i^{gb} = b z^{gb} &  \forall g \in \mathcal{G}, b \in B^g, \\
                & y_{g}^{gb} = z^{gb} &  \forall g \in \mathcal{G}, b \in B^g, \\
                & x_g^{gb} \ge \ell z^{gb} &  \forall g \in \mathcal{G}, b \in B^g, \\
                & x_g^{gb} \le c z^{gb} &  \forall g \in \mathcal{G}, b \in B^g, \\
                & y_i^{gb} \le z^{gb} &  \forall i \in [n], g \in \mathcal{G}, b \in B^g, \\
                & y_i^{gb} \ge 0 &  \forall i \in [n], g \in \mathcal{G}, b \in B^g, \\
                & z^{gb} \ge 0 &  \forall g \in \mathcal{G}, b \in B^g, 
\end{align*}
where $[n]$ is ordered so that $v_{1} \ge v_{2} \ge \ldots \ge v_{n}$, $\mathcal{G}=\left [1, \min \left \{\left\lfloor \frac{(n+1)c -d - \ell}{c - \ell} \right\rfloor,  n\right \}\right ]$, $B^g=\emptyset$ for all $g\in [n]\setminus \mathcal{G}$ and $B^g=\left [\max \left\{0,\left\lceil \frac{(d- (n+1-g)c) }{\ell} \right\rceil \right\}, g-1\right ]$  for $g\in \mathcal{G}$.  
\end{theorem}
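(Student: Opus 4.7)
The plan is to build on the decomposition in Section~\ref{sec:decomposition} and exploit the symmetry created by uniform bounds. Order the items so that $v_1 \ge v_2 \ge \cdots \ge v_n$. Theorem~\ref{thm:nice-optimal}, applied with $m=1$, asserts the existence of an optimal solution in which some index $g \in [n]$ plays the ``fractional'' role (so $y_g = 1$ and $x_g \in [\ell, c]$), every item $i<g$ satisfies $x_i = \ell y_i$, and every item $i > g$ satisfies $x_i = c y_i$. Because the covering coefficient is the same for all items in the low ($\ell$) and high ($c$) states, feasibility of the covering constraint depends on the choice of low items only through their count $b := |\{i<g : y_i = 1\}|$. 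Hence the feasible set of~\pu\ equals the union, over all $(g,b) \in [n] \times \{0,\ldots,g-1\}$ satisfying $(n-g+1)c + b\ell \ge d$, of sub-polyhedra $P^{gb}$ obtained by fixing this structure. Solving this last feasibility inequality for $g$ and $b$ yields the stated ranges $\mathcal{G}$ and $B^g$.

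For each $(g,b)$, the set $P^{gb}$ factors as a Cartesian product of two independent pieces, so $\conv(P^{gb})$ is the product of the two convex hulls. The first piece selects exactly $b$ items from $\{1,\ldots,g-1\}$, whose convex hull is the (integral) uniform-matroid polytope $\{y \in [0,1]^{g-1} : \sum_i y_i = b\}$. The second piece is the ``cover with one continuous variable'' set $\{(x_g,y_{g+1},\ldots,y_n) : x_g + c\sum_{i>g} y_i \ge d-b\ell,\ \ell \le x_g \le c,\ y_i \in \{0,1\}\}$. To describe its convex hull, substitute $\alpha := x_g - \ell$ and divide by $c$: the set becomes $\alpha/c + \sum_{i>g} y_i \ge (d-(b+1)\ell)/c$ with $0 \le \alpha/c \le (c-\ell)/c \le 1$, which fits the hypotheses of Lemma~\ref{lemma:convhull} with $\sigma = (c-\ell)/c$, $\delta = (d-(b+1)\ell)/c$, and $\nu = n-g$. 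Undoing the scaling transforms inequality~\eqref{y1} into the first cover inequality of the theorem (integer-rounded lower bound on $\sum_{i>g} y_i$) and~\eqref{y4} into the second (mixed-integer-rounding lower bound on $x_g$). Together with the original trivial inequality $x_g + c\sum_{i>g}y_i \ge d-b\ell$ and the bounds $\ell \le x_g \le c$, this completes the description of $\conv(P^{gb})$.

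The stated formulation is then obtained by Balas' union-of-polyhedra construction: introduce convex-combination weights $z^{gb} \ge 0$ with $\sum_{g,b} z^{gb}=1$ and lifted copies $x_g^{gb},\, y_i^{gb}$ of the variables of $P^{gb}$, subject to the scaled-by-$z^{gb}$ versions of the inequalities just derived together with $y_g^{gb}=z^{gb}$ and the cardinality constraint $\sum_{i<g} y_i^{gb}=b z^{gb}$. Projecting back via $x_i = \sum_{g>i,b}\ell y_i^{gb}+\sum_b x_i^{ib}+\sum_{g<i,b} c y_i^{gb}$ and $y_i = \sum_{g,b} y_i^{gb}$ recovers the original variables, and a routine computation shows the objective $\sum_i v_i x_i + \sum_i f_i y_i$ is preserved. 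Since the projection of $\conv\bigl(\bigcup_{(g,b)} P^{gb}\bigr)$ equals the convex hull of the feasible set of~\pu, the formulation is perfect. The size count is immediate: $|\mathcal{G}| = \bigO(n)$ and $|B^g| = \bigO(n)$, giving $\bigO(n^2)$ pairs and $\bigO(n)$ lifted variables per pair, for $\bigO(n^3)$ variables and constraints in total.

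I expect the main obstacle to be the arithmetic verification that the MIR inequality~\eqref{y4}, after the affine-plus-scaling change of variables, becomes \emph{exactly} the second cover inequality in the theorem statement, which requires careful bookkeeping of the floor and ceiling terms arising from $\delta$ and $\lfloor\delta\rfloor$. A secondary point is ensuring that the Balas lift interacts correctly with the uniform-matroid factor: the cardinality constraint $\sum_{i<g} y_i^{gb}=b z^{gb}$ combined with $0 \le y_i^{gb} \le z^{gb}$ already has integral extreme points (per copy), so no explicit $\{0,1\}$ requirement on the lifted $y_i^{gb}$ is needed, which is why the stated formulation is a purely linear program.
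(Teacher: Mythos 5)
Your proposal is correct and follows essentially the same route as the paper: decompose by the fractional index $g$ and the count $b$ of low items, describe the continuous cover piece via Lemma~\ref{lemma:convhull} with the substitution $\nu=n-g$, $\delta=(d-(b+1)\ell)/c$, $\sigma=(c-\ell)/c$, $\alpha=(x_g-\ell)/c$, and assemble with the union-of-polyhedra construction. Your phrasing of the per-$(g,b)$ set as a Cartesian product of the cardinality piece and the cover piece is just a restatement of the paper's intersection-plus-linking-equalities formulation, so there is no substantive difference.
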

\begin{proof}
First, $\min \left \{\left\lfloor \frac{(n+1)c -d - \ell}{c - \ell} \right\rfloor,  n\right \}$ is the largest index of a fractional variable that leads to a feasible solution, $\mathcal{G}=\left [1, \min \left \{\left\lfloor \frac{(n+1)c -d - \ell}{c - \ell} \right\rfloor,  n\right \}\right ]$ is the set of possible fractional variables, $\max \left\{0,\left\lceil \frac{(d- (n+1-g)c) }{\ell} \right\rceil \right\}$ is the minimum value of $\sum_{i=1}^{g-1} y_i$ that leads to a feasible solution and $B^g=\left [\max \left\{0,\left\lceil \frac{(d- (n+1-g)c) }{\ell} \right\rceil \right\}, g-1\right ]$ is the set of possible values for $\sum_{i=1}^{g-1} y_i$ at solutions of $X^g$ for $g\in \mathcal{G}$. 

Note that \[\conv(X)= \conv \bigcup_{g\in \mathcal{G}}\bigcup_{b\in B^g} \conv \left (X^g\cap \left \{(x,y): \sum_{i=1}^{g-1}  y_i=b \right \} \right),\] 
where $X^g$ is the set of all points  $(x,y)$ that satisfy 
\begin{align*}
& x_i = \ell y_i \quad i\in [g-1],\\
& x_i = cy_i \quad i\in [g+1, n],\\
& \sum_{i=1}^{g-1} \ell y_i +x_g + \sum_{i=g+1}^n c y_i \ge d, & \\
&   \ell\leq x_g \leq c, & \\
& y_g=1, & \\
&    y_i \in \{0,1\} \quad  \forall i \in [n].
\end{align*}

For $g\in \mathcal{G}$ and $b\in B^g$, we have  
\begin{align*} 
\conv \left(X^g\cap \left\{(x,y) \in \R^{n}_\ge\times \{0,1\}^{n}: \sum_{i=1}^{g-1} y_i=b \right\}\right) & =\conv(X^{g1}(b))\cap \conv(X^{g2}(b))\notag \\
&\phantom{=} \cap
\{(x,y) \in \R^{n}\times \R^{n}: x_i = \ell y_i \: \forall i\in [g-1], \notag \\
& \phantom{=\cap \{(x,y) \in \R^{n}\times \R^{n}: }\:\: x_i = cy_i \:\forall i\in [g+1,  n]\}, 
\end{align*}
where 
$$X^{g1}(b)=\left \{(x,y) \in \R^{n}\times \{0,1\}^{n}: \sum_{i=1}^{g-1} y_i=b, y_g=1\right \}$$
and 
$$X^{g2}(b)=\left \{(x,y)\in \R^{n}_\ge\times \{0,1\}^{n}: x_g + \sum_{i=g+1}^n c y_i \ge d-\ell b, \ell\leq x_g \leq c \right \}.$$ 
\\
The description of $\conv(X^{g1}(b))$ is given by 
\begin{align*}
    & \sum_{i=1}^{g-1} y_i = b,& \\
    & y_g=1, &\\
    & 0 \leq y_i \leq 1 \quad \forall i \in [g-1], 
\end{align*}
and the description of $\conv(X^{g2}(b))$ is given by 
\begin{align*}
 & x_g + \sum_{i=g+1}^n  c y_i \ge d - b \ell, &\\
& \ell \leq x_g \le c,\\
  & 0 \leq y_i \le 1 \quad \forall i \in [g+1, n],\\
  &    \sum_{i=g+1}^n y_i \ge \left\lceil \frac{d - b\ell- c}{c}\right\rceil, & \\
 &    x_g  \geq \ell + \left(\frac{d - (b+1) \ell}{c} - \left\lfloor \frac{d - (b+1) \ell}{c}\right\rfloor  \right) \left(\left\lceil \frac{d - (b+1) \ell}{c}\right\rceil -\sum_{i=g+1}^n y_i\right), 
  \end{align*}
after setting $\nu=n-g$, $\delta=\frac{d-(b+1)\ell}{c}$,  $\sigma=\frac{c-\ell}{c}$ and $\alpha=\frac{x_g-\ell}{c}$ in Lemma~\ref{lemma:convhull}. 
Using the result on the convex hull of the union of polyhedra \citep{conforti2008compact, balas1998disjunctive}, we obtain the tight formulation.

\end{proof}

A perfect compact formulation can also be derived for the packing version of~\pu. This is achieved by using sets $\mathcal{G}= [n]$ and $B^g=\left [0, \min \{ \left\lfloor d - \ell/\ell\right\rfloor, g \} \right]$ for $g\in \mathcal{G}$. The primary distinction from the covering case lies in the description of the convex hull $\conv(X^{g2}(b))$, which can be characterized using an approach analogous to Lemma~\ref{lemma:convhull}.
\section{Conclusion and further research}\label{sec:conclusion}
This paper presents approximation schemes for various mixed-integer problems with a fixed number of constraints. By analyzing the structure of the extremes points of the associated polytope of such problems, we decompose it into multidimensional knapsack cover instances with a single continuous variable per dimension. 
By leveraging approximation algorithms from the knapsack literature, we obtain a PTAS for~\p. Utilizing a similar methodology, an FPTAS and an $\epsilon$-approximate formulation is presented for the one-dimensional case. 
For the one-dimensional case with uniform bounds, we propose a perfect compact formulation. To obtain these results, some classical knapsack results are extended to incorporate continuous variables. These results are especially valuable as they provide a framework to design algorithms and tight formulations for similarly structured problems.
We also derive analogous results for the packing and more general variants of \p.

Several promising directions for future research emerge from this work. 
A natural first direction is to extend the results  to more general versions of \p\ with bounded integer variables and piecewise-linear objective functions. 
Also, obtaining an $\epsilon$-approximate formulation for~\p~presents an interesting avenue for further exploration.

\bibliography{Bibliography}

\newpage
\appendix
\section{Appendix}
Define~\mkcone:
\begin{align*}
        \OPT^Z =\;\;  &  \min \bar v \alpha  + \sum_{i \in [\eta]} \bar f_i y_i  &\\
    \text{s.t.}\;\; & \sum_{i \in [\eta]} \bar w_{i} y_i \ge \bar d - \alpha  & \\
                & \bar c \ge \alpha \ge 0 & \\
                & y_i \in \{0,1\} & \forall i \in [\eta],
\end{align*}
where $\bar f$ and $\bar v$ are both positive, $0 < \bar w_i \le \bar d$ for all $i \in [\eta]$, and $0 < \bar c \le \bar d$. Let $OPT^*$ be the value of the LP relaxation of~\mkcone. 
Assume, without loss  of generality, that  $\bar f_1 \ge \bar f_2 \ge \ldots \ge \bar f_{\eta}$. 
In this appendix, we prove the following theorem using an approach and structure that adheres closely to what is shown by \citet{bienstockTighteningSimpleMixedinteger2012}.
\begin{theorem}
    Let $\epsilon \in (0,1)$. 
    There exists a linear programming relaxation of~\mkcone~with $\bigO \left((\frac{1}{\epsilon})^{\bigO(\frac{1}{\epsilon^2})}\eta^2\right)$ variables and constraints whose optimal value $\OPT(\epsilon)$ satisfies $\OPT^Z < (1+\epsilon)\OPT(\epsilon)$. 
\end{theorem}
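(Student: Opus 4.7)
The plan is to adapt Bienstock's approximate-formulation technique for the 0-1 knapsack cover polytope \citep{bienstockTighteningSimpleMixedinteger2012} so that it accommodates the continuous variable $\alpha$. The strategy is disjunctive: enumerate a polynomial-sized family of \emph{configurations} that specify the coarse structure of a near-optimal solution, construct for each configuration a compact LP of size $O(\eta^2)$ that approximates the restricted subproblem, and then take the union of these LPs via Balas' extended formulation for the convex hull of a union of polyhedra. The overall size is bounded by the number of configurations times $O(\eta^2)$.

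First I would normalize: after a preliminary guess of $\OPT^Z$ up to a factor $(1+\epsilon)$ (for which there are only $O(\log(\eta \max_i \bar f_i + \bar v \bar c)/\epsilon)$ candidate values, to be absorbed in the outer disjunction), rescale $\bar f$ and $\bar v$ so that the guessed value is $1$. Partition items into cost classes $C_k = \{ i \in [\eta] : \bar f_i \in [(1+\epsilon)^k, (1+\epsilon)^{k+1}) \}$. Call item $i$ \emph{heavy} if $\bar f_i \ge \epsilon$ and \emph{light} otherwise; heavy items lie in only $O(\epsilon^{-1}\log(1/\epsilon))$ classes, and any feasible solution of cost at most $(1+\epsilon)\OPT^Z$ contains at most $O(1/\epsilon)$ heavy items.

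Next I would define a configuration $\pi$ to consist of (i) a multiplicity $n_k \in \{0,1,\dots,O(1/\epsilon)\}$ for each heavy class $C_k$; (ii) a discretized cost $\tilde v$ for the continuous contribution $\bar v \alpha$ on a grid of step $\epsilon$; and (iii) for each heavy class, a discretized total weight contributed by its selected items on a geometric grid of ratio $1+\epsilon$ with $O(1/\epsilon)$ levels. Counting gives $(1/\epsilon)^{O(1/\epsilon)}$ choices for the multiplicities and $(1/\epsilon)^{O(1/\epsilon^2)}$ for the joint weight discretization across classes, matching the claimed bound. For each configuration I would build a per-configuration LP that (a) enforces the specified multiplicities and approximate weight contributions on heavy items using an $O(\eta^2)$ compact extended formulation (a flow-style or layered formulation over ordered items within each class suffices), (b) fixes $\bar v \alpha$ to $\tilde v$ (turning $\alpha$ into a constant for that configuration), and (c) relaxes the $y_i$ on light items to $[0,1]$ with the covering constraint rewritten in terms of the residual demand $\bar d - \tilde v/\bar v - W_{\text{heavy}}$.

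Finally I would assemble the $(1/\epsilon)^{O(1/\epsilon^2)}$ per-configuration LPs into one LP via Balas' extended formulation, yielding the claimed total size. The correctness argument has two directions: a true integer optimum $(y^*, \alpha^*)$ rounds naturally into the closest configuration $\pi$ with $\tilde v \le \bar v \alpha^*$, and the resulting point is feasible in $\pi$'s LP at cost at most $(1+\epsilon)\OPT^Z$; conversely, any optimal LP solution of the union lies inside some per-configuration polytope, and a basic light-item LP solution has only $O(1)$ fractional coordinates, each of cost $<\epsilon \OPT^Z$, so integer rounding adds at most $O(\epsilon) \OPT^Z$. The main obstacle is the continuous variable $\alpha$: one must verify that replacing $\alpha$ by the grid point $\tilde v/\bar v$ does not render the configuration infeasible by more than an $\epsilon$-fraction, which requires showing that any residual deficit of at most $\epsilon \bar c$ in the covering constraint can be absorbed by boosting a single light-item LP value at an additional cost of at most $O(\epsilon)\OPT^Z$; establishing this clean-up estimate, and carefully integrating it with the heavy-item disjunction so that the $(1+\epsilon)$ factor compounds only once, is the delicate step that goes beyond the pure 0-1 case treated by \citet{bienstockTighteningSimpleMixedinteger2012}.
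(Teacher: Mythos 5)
Your proposal follows the same Bienstock-style disjunctive template as the paper (enumerate coarse ``configurations,'' write a small LP for each, take the Balas union, and round each configuration's LP optimum to an integer point at a $(1+\epsilon)$ loss). But it diverges precisely at the point the theorem is about, and the divergence leaves a real gap. The paper never discretizes $\alpha$: in each disjunct $P^{h,\sigma}$ the continuous variable stays free, subject only to $\sum_i \bar w_i y_i \ge \bar d - \alpha$ and $0\le\alpha\le\bar c$, and the rounding lemma (an extreme-point argument on an LP with two non-bound constraints per cost class) produces $0/1$ values $\hat y$ with $\sum_i \bar w_i\hat y_i \ge \sum_i \bar w_i\bar y_i$; hence the \emph{same} $\bar\alpha$ remains feasible for the rounded point and contributes no error at all. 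Your substitute --- snapping $\bar v\alpha$ to an $\epsilon$-grid with $\tilde v\le\bar v\alpha^*$ --- shrinks $\alpha$ and therefore creates a coverage deficit of up to $\epsilon/\bar v$ that must be repaired by the $y$-variables; the cost of that repair depends on the cost-to-weight ratios of the light items and is not bounded by $O(\epsilon)\OPT^Z$ in general. You correctly flag this as ``the delicate step,'' but it is exactly the step you do not establish, and it is not needed if $\alpha$ is simply carried through unrounded. A similar remark applies to your discretization of the per-class weight contributions: the paper keeps the weight constraint exact inside each disjunct (only the cardinalities $\sum_{i\in S^{h,k}} y_i$ are fixed by the signature), so no multiplicative weight error ever enters the covering inequality.

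Two secondary points. First, the paper avoids guessing $\OPT^Z$ by branching on $h$, the most expensive selected item (items sorted by decreasing $\bar f_i$, with $y_1=\dots=y_{h-1}=0$, $y_h=1$), and defining the $K=O(\epsilon^{-1}\log(1/\epsilon))$ cost classes relative to $\bar f_h$; your outer enumeration over $O(\log(\eta f_{\max}+\bar v\bar c)/\epsilon)$ guesses of $\OPT^Z$ would introduce a data-dependent factor absent from the claimed $\bigO\bigl((1/\epsilon)^{\bigO(1/\epsilon^2)}\eta^2\bigr)$ bound. Second, your configuration count is in the right ballpark, but the extra weight-level index per class is both unnecessary and the source of the unjustified error term above. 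If you drop the discretization of $\alpha$ and of the weights, keep only the per-class cardinalities (capped at $J=\lceil 1+1/\epsilon\rceil$) as the signature, and prove the two-constraint extreme-point rounding lemma, your argument collapses onto the paper's proof.
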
 
\noindent Before proving this result, we first show the following lemma.
\begin{lemma} \label{lemma:appendix1}
    Let $H \ge 1$ be an integer. 
    Suppose $S' \subseteq [\eta]$, and let $0 \le \Bar{y_i} \le 1 $ $(i \in S')$ be given values. 
    Let $f^{\max} = \max_{i \in S'}\{\bar f_i\}$, $f^{\min} = \min_{i \in S'}\{\bar f_i\}$. 
    \\
    \\
    (a) Suppose first that 
    \[
    \sum_{i \in S'} \Bar{y_i} = H
    \]
    Then there exist 0/1 values $\Hat{y_i}$ ($i \in S'$) such that
 \begin{equation}
     \sum_{i \in S'} \bar w_i \Hat{y_i} \ge \sum_{i \in S'} \bar w_i \Bar{y_i} \label{FPTAS_formulation}
 \end{equation}
    and 
    \[
    \sum_{i \in S'} \bar f_i \Hat{y_i} \le (1 - \frac{1}{H} + \frac{f^{\max}}{H f^{\min}})\sum_{i \in S'} \bar f_i \Bar{y_i}
    \]
    (b) Suppose next that
    \[
    \sum_{i \in S'} \Bar{y_i} \ge H
    \]
    Then there exist 0/1 values $\Hat{y_i}$ ($i \in S'$) satisfying \eqref{FPTAS_formulation} and 
    \[
    \sum_{i \in S'} \bar f_i \Hat{y_i} \le (1 + \frac{f^{\max}}{H f^{\min}})\sum_{i \in S'} \bar f_i \Bar{y_i}
    \]
\end{lemma}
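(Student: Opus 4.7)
The plan is to prove both parts of the lemma by rounding arguments that exploit the integrality of the hypersimplex. For part~(a), the hypothesis $\sum_{i \in S'} \bar y_i = H$ with $H$ a positive integer and each $\bar y_i \in [0,1]$ guarantees that $\bar y$ lies in the convex hull of $\{0,1\}$-indicator vectors $\chi_{S_k}$ of subsets $S_k \subseteq S'$ with $|S_k|=H$. Writing $\bar y = \sum_k \lambda_k \chi_{S_k}$ with $\lambda_k \geq 0$ summing to $1$, an averaging argument yields $\sum_k \lambda_k \sum_{i \in S_k} \bar w_i = \sum_{i \in S'} \bar w_i \bar y_i$ and an analogous identity for the cost. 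In particular, some subset $S_{k^\star}$ satisfies the weight inequality in~\eqref{FPTAS_formulation}; as a first attempt one would set $\hat y = \chi_{S_{k^\star}}$.

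To obtain the sharper cost bound $(1 - 1/H + f^{\max}/(H f^{\min}))\sum_{i \in S'} \bar f_i \bar y_i$, I would refine this choice by an exchange step: among the $H$ items in $S_{k^\star}$, the most expensive contributes at least the average $(1/H)\sum_{i \in S_{k^\star}} \bar f_i$ to the cost and at most $f^{\max}$. Removing it, and then, if the weight condition fails, adding a single replacement item from $S'\setminus S_{k^\star}$ of maximal weight to restore it, incurs a net cost change of at most $f^{\max} - (1/H)\sum_{i \in S_{k^\star}} \bar f_i$. The bound $\sum_{i \in S'} \bar f_i \bar y_i \geq H f^{\min}$ then allows rewriting the residual $f^{\max}$ term as at most $(f^{\max}/(H f^{\min}))\sum_{i \in S'} \bar f_i \bar y_i$, yielding the desired inequality. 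Making this rigorous may require selecting $S_{k^\star}$ as an optimizer of a weighted combination of weight and negative cost within the hypersimplex decomposition, so that both the weight and the cost inequalities hold for the same subset simultaneously rather than needing to be combined across distinct candidates.

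For part~(b), a parallel but simpler argument applies. Since $\sum_{i \in S'} \bar y_i \geq H$, one has more slack than in part~(a): the fractional solution can be rounded by selecting the top items by weight, absorbing at most $f^{\max}$ of extra cost beyond $\sum_{i \in S'} \bar f_i \bar y_i$. Using $\sum_{i \in S'} \bar f_i \bar y_i \geq H f^{\min}$, this extra cost is bounded by $(f^{\max}/(H f^{\min}))\sum_{i \in S'} \bar f_i \bar y_i$, giving the claim directly without any need for the $-1/H$ savings term from part~(a).

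The main obstacle is the cost analysis in part~(a): balancing the $1/H$ saving from removing the costliest item against the up-to-$f^{\max}$ cost of adding a replacement, so that the net overhead matches the claimed multiplicative factor. This balance relies crucially on the hypersimplex decomposition producing a subset whose internal cost profile has a controlled maximum, and on the exchange step simultaneously preserving the weight condition~\eqref{FPTAS_formulation} and the cost bound.
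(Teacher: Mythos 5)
Your high-level intuition is right, but as written the argument for part~(a) has a genuine gap that the paper's proof is specifically designed to avoid. The hypersimplex decomposition plus averaging gives you a subset $S_{k^\star}$ satisfying the weight inequality~\eqref{FPTAS_formulation}, but it gives you no control whatsoever on $\sum_{i\in S_{k^\star}}\bar f_i$: averaging controls one linear functional at a time, and the subset certifying the weight bound may consist entirely of the most expensive items, so its cost can be as large as $Hf^{\max}$ while $\sum_{i\in S'}\bar f_i\bar y_i$ can be as small as $Hf^{\min}$. Your exchange step cannot close this gap: removing the single costliest item of $S_{k^\star}$ saves only a $1/H$ fraction of \emph{that subset's} cost (not of $\sum_{i\in S'}\bar f_i\bar y_i$), and the proposed replacement by a maximal-weight item of $S'\setminus S_{k^\star}$ need not restore the weight inequality, since the removed item may be heavier than every item still available. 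You yourself flag the need to pick $S_{k^\star}$ as an optimizer of some combination of weight and cost; that is precisely the missing idea, and it is where the real work lies.

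The paper resolves this by taking $\dot y$ to be an optimal \emph{extreme point} of the linear program $\min\{\sum_{i\in S'}\bar f_i y_i : \sum_{i\in S'}\bar w_i y_i\ge\sum_{i\in S'}\bar w_i\bar y_i,\ \sum_{i\in S'}y_i=H,\ 0\le y\le 1\}$. Since $\bar y$ is feasible for this LP, the cost of $\dot y$ is automatically at most $\sum_{i\in S'}\bar f_i\bar y_i$; and since the LP has only two constraints besides the variable bounds, $\dot y$ has at most two fractional coordinates, which (because $H$ is an integer) sum to $1$. Rounding the heavier of the two up and the other down preserves the weight inequality and increases the cost by $(\bar f_j-\bar f_k)(1-\dot y_j)\le f^{\max}-f^{\min}$, which, divided by $\sum_{i\in S'}\bar f_i\dot y_i\ge Hf^{\min}$, yields exactly the factor $-1/H+f^{\max}/(Hf^{\min})$. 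Part~(b) is the same with $\sum_{i\in S'}y_i\ge H$, where at most one fractional coordinate remains and is rounded up at extra cost at most $f^{\max}$; your sketch for (b) (``select the top items by weight'') has the same defect of never tying the cost of the rounded solution to $\sum_{i\in S'}\bar f_i\bar y_i$. I recommend rebuilding the proof around the LP extreme point rather than the hypersimplex decomposition.
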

\begin{proof}
    (a) Let $\dot{y}$ be an extreme point solution to the linear program
    \begin{align}
        \min\;\;  &  \sum_{i \in S'} \bar f_i y_i  & \notag\\
    \text{s.t.}\;\; & \sum_{i \in S'} \bar w_{i} y_i \ge \sum_{i \in S'} \bar w_i \Bar{y_i}    & \label{28} \\
                &\sum_{i \in S'} y_i = H & \label{29}\\
                &0 \le y_i \le 1 & \forall i \in S' \notag
\end{align}
Because this linear program only has two constraints (excluding variable bounds), any basic feasible solution (extreme point) has at most two basic variables. 
Consequently, at most two of the values $\dot{y_i}$, $i \in S'$, are fractional, but since $H$ is integral, either zero or exactly two $\dot{y_i}$ are fractional. 
Suppose there exist $j,k \in S'$ with $j\neq k$ such that
\[
0 < \dot{y_j} < 1, \quad 0 < \dot{y_k} < 1, \quad \dot{y_j} + \dot{y_k} = 1.
\]
Assume, without loss of generality, that $\bar w_j \ge \bar w_k$. Then we can set $\Hat{y_j} = 1$, $\Hat{y_k} = 0$, and $\Hat{y_i} = \dot{y_i}$ for all $i \in S'\setminus\{j,k\}$, thereby obtaining a 0/1 vector $\Hat{y}$ which satisfies \eqref{28} while increasing the cost of $\dot y$ by at most 
\[
\bar f_j - \bar f_j \dot{y_j} - \bar f_k \dot{y_k} = (\bar f_j - \bar f_k) (1 - \dot{y_j}) \le f^{\max} - f^{\min}.
\]
Hence we obtain
\[
\frac{\sum_{i \in S'} \bar f_i \Hat{y_i} - \sum_{i \in S'} \bar f_i \dot{y}_i}{\sum_{i \in S'} \bar f_i \dot{y}_i} \le \frac{f^{\max} - f^{\min}}{\sum_{i \in S'} \bar f_i \dot{y}_i} \le \frac{f^{\max} - f^{\min}}{H f^{\min}} 
.\]
\\
\\
(b) Proceeding in a similar way to (a) (using, instead of \eqref{29}, $\sum_{i \in S'} y_i \ge H $), it can be assumed that either zero, one or two of the $\dot{y_i}$ ($i \in S'$) are fractional. If two are fractional the result is implied by (a). If there is only one fractional $\dot{y_i}$ then rounding up $\dot{y}$ provides a 0/1 vector $\Hat{y}$ that is feasible while increasing the cost by at most $f^{\max}$. Hence it holds that
\[
\frac{\sum_{i \in S'} \bar f_i \Hat{y_i} - \sum_{i \in S'} \bar f_i \dot{y}_i}{\sum_{i \in S'} \bar f_i \dot{y}_i} \le \frac{f^{\max}}{H f^{\min}}
.\]
\end{proof}

Let $\epsilon \in (0,1)$. Define $K$ as the smallest integer such that $(1 + \epsilon)^{-K} \le \epsilon$. Note that $K \le 1 +\log(1/\epsilon)/\log (1+\epsilon) $.
Let us define $J = \lceil 1 + \frac{1}{\epsilon} \rceil$. 
\begin{definition}
A signature~$\sigma$ is a vector in $\Z^K$ such that $0 \le \sigma_i \le J$ for $i \in  [K]$.
\\
\\
For every $h \in [\eta]$ and  $k \in [K]$, define 
\[
S^{h,k} = \left\{i \in [\eta]: \bar f_h (1 + \epsilon)^{-(k-1)} \ge \bar f_i > \bar f_h (1 + \epsilon)^{-k}  \text{ and } i > h \right\}. 
\]
For each $h \in [\eta]$ and each signature $\sigma$, define 
\begin{equation*}
\begin{split}
P^{h,\sigma} = \Bigg\{ (y,\alpha) \in [0,1]^{\eta} \times \mathbb{R} : & \sum_{i \in [\eta]} \bar w_i y_i \ge \bar d - \alpha, \quad \bar c \ge \alpha \ge 0, \\
& y_1 = y_2 = \ldots = y_{h-1} = 0, \quad y_h = 1, \\
& \sum_{i \in S^{h,k}} y_i = \sigma_k \quad \forall k: \sigma_k < J, \\
& \sum_{i \in S^{h,k}} y_i \ge J \quad \forall k: \sigma_k = J \Bigg\}
\end{split}
\end{equation*}
\end{definition}
\noindent Note that the sets $S^{h,k}$ partition the variables $y_i$ whose cost $\bar f_i$ lies between $\bar f_h$ and $\epsilon \bar f_h$. Furthermore, all variables in each set $S^{h,k}$ have nearly the same cost. 
For a given $P^{h,\sigma}$, the signature $\sigma$ counts the number of $y_i$ that take value $1$ in each $S^{h,k}$ with $k \in [K]$. 
Thus every feasible solution for~\mkcone~belongs to some $P^{h,\sigma}$. To see why, consider a feasible solution of~\mkcone~and let $h$ be the smallest index $i \in [\eta]$ for which $y_i = 1$. For any set $S^{h,k}$, let $\ell$ be the number of indices $i \in [\eta]$ within this set for which $y_i = 1$. The value of $\sigma_k$ is then defined as $\min\{\ell, J\}$.

\begin{lemma} \label{lemma:appendix2}
    For each $h \in [\eta]$ and signature~$\sigma$ with $P^{h,\sigma} \neq \emptyset$, there is $(\Hat{y}^{h,\sigma}, \Hat{\alpha}^{h,\sigma}) \in \B^{\eta}\times \R_\ge $ which is feasible for~\mkcone~such that $\bar f^T \Hat{y}^{h,\sigma} + \bar v \Hat{\alpha}^{h,\sigma} \le (1 + \epsilon) \min \{\bar f^T y + \bar v \alpha : (y,\alpha) \in P^{h,\sigma}\}$. As a result, $\OPT^Z \le (1 + \epsilon) \min \{\bar f^T y + \bar v \alpha: (y,\alpha) \in  P^{h,\sigma}\}$.
\end{lemma}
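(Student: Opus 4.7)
The plan is to fix an optimal LP solution $(\bar y, \bar \alpha)$ of $\min\{\bar f^T y + \bar v \alpha : (y,\alpha) \in P^{h,\sigma}\}$ with value $\mathrm{LP}^\star$, construct $\hat y$ by rounding each signature bucket $S^{h,k}$ ($k\in[K]$) and the tail $T := \{i > h : \bar f_i \le \bar f_h (1+\epsilon)^{-K}\}$ separately via Lemma~\ref{lemma:appendix1}, and keep $\hat \alpha := \bar \alpha$ unchanged. Because each bucket rounding preserves the total weight on its bucket (by the first inequality in Lemma~\ref{lemma:appendix1}), and because $\hat y_h = \bar y_h = 1$ and $\hat y_i = \bar y_i = 0$ for $i < h$, the knapsack inequality $\sum_i \bar w_i \hat y_i \ge \bar d - \hat \alpha$ is preserved, so $(\hat y, \hat \alpha) \in \B^\eta \times \R_\ge$ is feasible for \mkcone. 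The crux of the cost analysis is that inside $S^{h,k}$ the ratio $f^{\max}/f^{\min}$ is strictly less than $1+\epsilon$, precisely what is needed to tighten Lemma~\ref{lemma:appendix1} to a $(1+\epsilon)$ blow-up; outside the buckets the items are so cheap that the error is absorbed by $\bar f_h \epsilon$.

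For the signature buckets I would split into two cases. If $\sigma_k < J$, then $\sum_{i\in S^{h,k}}\bar y_i = \sigma_k$ and Lemma~\ref{lemma:appendix1}(a) with $H=\sigma_k$ gives rounded cost at most $1 + (f^{\max}/f^{\min} - 1)/\sigma_k < 1 + \epsilon/\sigma_k \le 1+\epsilon$ times the fractional cost on that bucket. If $\sigma_k = J$, then $\sum_{i\in S^{h,k}}\bar y_i \ge J$ and Lemma~\ref{lemma:appendix1}(b) with $H=J$ gives factor $1 + f^{\max}/(Jf^{\min}) < 1 + (1+\epsilon)/J \le 1+\epsilon$, using $J = \lceil 1 + 1/\epsilon\rceil$.

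The main obstacle is the tail $T$, where $f^{\max}/f^{\min}$ can be arbitrarily large and Lemma~\ref{lemma:appendix1}'s multiplicative bound is useless. I would instead appeal to the absolute bound that is visible inside its proof: a single rounding step perturbs the cost by at most $f^{\max}_T \le \bar f_h(1+\epsilon)^{-K} \le \bar f_h \epsilon$. Concretely, if $\sum_{i\in T}\bar y_i \ge 1$, apply Lemma~\ref{lemma:appendix1}(b) with $H=1$; if $0 < \sum_{i\in T}\bar y_i < 1$, pick $i^\star \in \arg\max\{\bar w_i : i\in T,\, \bar y_i>0\}$ and set $\hat y_{i^\star} = 1$ with $\hat y_i = 0$ for the other $i\in T$, observing that $\bar w_{i^\star} > \bar w_{i^\star}\sum_{i\in T}\bar y_i \ge \sum_{i\in T}\bar w_i \bar y_i$ so the weight is preserved and the extra cost is at most $\bar f_{i^\star} \le \bar f_h\epsilon$; if $\sum_{i\in T}\bar y_i = 0$, take $\hat y_i = 0$ on $T$ at no cost.

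Combining the per-bucket and tail estimates with the $\bar f_h$ contribution from $\hat y_h = 1$ and the unchanged $\bar v \bar \alpha$ yields
\[
\bar f^T \hat y + \bar v \bar\alpha \;\le\; (1+\epsilon)\bar f_h \;+\; (1+\epsilon)\!\!\sum_{k\in[K]}\sum_{i\in S^{h,k}}\!\!\bar f_i \bar y_i \;+\; \sum_{i\in T}\bar f_i \bar y_i \;+\; \bar v \bar \alpha \;\le\; (1+\epsilon)\,\mathrm{LP}^\star,
\]
where the last step uses $1+\epsilon \ge 1$ on the two untouched terms. Since $(\hat y, \bar\alpha)$ is feasible for \mkcone, the first assertion of the lemma follows, and the second assertion $\OPT^Z \le (1+\epsilon)\min\{\bar f^T y + \bar v \alpha : (y,\alpha)\in P^{h,\sigma}\}$ is immediate from $\OPT^Z \le \bar f^T \hat y + \bar v \bar \alpha$.
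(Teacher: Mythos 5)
Your proposal is correct and follows essentially the same route as the paper: round each signature bucket $S^{h,k}$ with Lemma~\ref{lemma:appendix1} (using the within-bucket bound $f^{\max}/f^{\min}<1+\epsilon$ for $\sigma_k<J$ and the choice of $J$ for $\sigma_k=J$), handle the cheap tail with an additive error of at most $(1+\epsilon)^{-K}\bar f_h\le\epsilon\bar f_h$, keep $\hat\alpha=\bar\alpha$, and charge the additive term to $\bar f_h$ via $\bar y_h=1$. Your tail treatment is phrased slightly differently (invoking the additive bound inside Lemma~\ref{lemma:appendix1}'s proof rather than re-solving the one-constraint knapsack-cover LP on $T^h$ and rounding its single fractional variable, as the paper does), but the idea and the resulting bound are the same.
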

\begin{proof}
    Let $(\Bar{y},\Bar{\alpha}) \in P^{h,\sigma}$. Define $f^{\max} = \max_{i \in S^{h,k}} \{ \bar f_i\}$, $f^{\min} = \min_{i \in S^{h,k}} \{ \bar f_i\}$.
    Let $\Hat{y}^{h,\sigma}$ be defined as follows. First, for each $k \in [K]$ such that $\sigma_k > 0$, we obtain the values $\Hat{y}_i^{h,\sigma}$ for each $i \in S^{h,k}$ by applying Lemma~\ref{lemma:appendix1} with $S' = S^{h,k}$; note that when $\sigma_k < J$, then we have 
    \[
    1 - \frac{1}{\sigma_k} + \frac{f^{\max}}{\sigma_k f^{\min}} \le \left(1 - \frac{1}{\sigma_k}\right) \frac{f^{\max}}{f^{\min}}+ \frac{f^{\max}}{\sigma_k f^{\min}} = \frac{f^{\max}}{f^{\min}} \le 1 + \epsilon
    \] 
    \noindent by construction of the sets $S^{h,k}$. If $\sigma_k = J$, we have 
\[
1 + \frac{f^{\max}}{J f^{\min}} \le 1 + \epsilon
\]
by our choice for $J$. 
If on the other hand $ \sigma_k = 0$ we set $\Hat{y}_i^{h,\sigma} = 0$ for every $i \in S^{h,k}$. Finally, define $T^h = \{i \in [\eta]: \bar f_i \le (1 + \epsilon)^{-K} \bar f_h\}$. Thus the $S^{h,k}$ for all $l \in [K]$, together with $T^h$, form a partition $[\eta - h]$. 
The problem
    \begin{align*}
        \min\;\;  &  \sum_{i \in T^h} \bar f_i y_i  &\\
    \text{s.t.}\;\; & \sum_{i \in T^h} \bar w_{i} y_i  \ge \sum_{i \in T^h} \bar w_i \Bar{y_i}   &  \\
    &0 \le y_i \le 1 & \forall i \in T^h 
\end{align*}
is a linear relaxation of a knapsack cover problem, and hence it has an optimal solution $y^*$ with at most one fractional variable. We set $\Hat{y}_i^{h,\sigma} = \lceil y^*_i\rceil$ for each $i \in T^h$; thereby increasing cost (from $y^*$) by less than $(1 + \epsilon)^{-K} \bar f_h \le \epsilon \bar f_h$ by definition of $K$ and $\Hat{\alpha}^{h,\sigma} = \Bar{\alpha}$. 
In summary, 
\begin{align}
\bar f^T \Hat{y}^{h,\sigma} + \bar v \Hat{\alpha}^{h,\sigma} - \bar f^T \Bar{y} - \bar v \Bar{\alpha}& = \sum_{k: \sigma_k > 0} \left( \sum_{i \in S^{h,k}} \bar f_i \Hat{y_i}^{h,\sigma} - \sum_{i \in S^{h,k}} \bar f_i \Bar{y_i} \right) + \sum_{i \in T^h} \bar f_i (\Hat{y}_i^{h,\sigma} - \Bar{y}_i) \nonumber\\
   &\le \epsilon \sum_{k: \sigma_k > 0} \sum_{i \in S^{h,k}} \bar f_i \Bar{y_i} + \epsilon \bar f_h \label{44} \\
&\le \epsilon \bar f^T \Bar{y} \label{45}
\end{align}
Here, the equality follows from $\Hat{\alpha}^{h,\sigma} = \Bar{\alpha}$, \eqref{44} follows from Lemma~\ref{lemma:appendix1} and by definition of sets $S^{h,k}$ and~\eqref{45} follows from the fact that $\Bar{y}_h = 1$, by definition of $ P^{h,\sigma}$.
\end{proof}
Consider the polyhedron 
\[
Q = \conv \left( \bigcup_{h \in [\eta], \sigma \in ([J]\cup\{0\})^K} P^{h,\sigma} \right)
\]
Note that there are at most $(J + 1)^K \eta = \bigO\left(\epsilon^{(\log \epsilon/ \log (1+\epsilon)) - 1}\eta \right)$ polyhedra $P^{h,\sigma}$, and that each $P^{h,\sigma}$ is described by a system with $\bigO(K + \eta)$ constraints in $\eta$ variables. 
Thus, $Q$ is the projection to $\mathbb{R}^{\eta+1}$ of the feasible set for a system with at most $\bigO\left(\epsilon^{(\log \epsilon/ \log (1+\epsilon)) - 1}\eta^2 \right)$ constraints and variables. Furthermore, any 0/1 vector $y$ and $\alpha$ that is feasible for \mkcone~ satisfies $y \in P^{h,\sigma}$ for some $h$ and $\sigma$. In other words, $Q$ constitutes a valid relaxation to \mkcone. 
\begin{lemma}
    $\OPT^Z \le (1 + \epsilon) \min \{ \bar f^T y + \bar v \alpha : (y,\alpha) \in Q\}$.
\end{lemma}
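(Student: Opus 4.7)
The strategy is to exploit the fact that $Q$ is defined as the convex hull of the union $\bigcup_{h,\sigma} P^{h,\sigma}$, and to combine this with Lemma~\ref{lemma:appendix2}, which already provides a near-integer rounding guarantee for each individual polyhedron $P^{h,\sigma}$. Essentially all the technical content has been absorbed into Lemma~\ref{lemma:appendix2}, so what remains is to pass that per-polyhedron guarantee through the union-of-polyhedra construction.

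\textbf{Main steps.} Let $\mathcal{S}$ denote the set of pairs $(h,\sigma) \in [\eta] \times (\{0,1,\ldots,J\})^K$ for which $P^{h,\sigma} \neq \emptyset$, and let $(y^*, \alpha^*)$ attain $\min\{\bar f^T y + \bar v \alpha : (y,\alpha) \in Q\}$. Since each $P^{h,\sigma}$ is bounded (the $y$-variables lie in $[0,1]$ and $\alpha \in [0,\bar c]$), any point of $Q$ can be written as a finite convex combination $(y^*,\alpha^*) = \sum_{(h,\sigma)\in\mathcal{S}} \lambda^{h,\sigma} (y^{h,\sigma}, \alpha^{h,\sigma})$ with $(y^{h,\sigma}, \alpha^{h,\sigma}) \in P^{h,\sigma}$, $\lambda^{h,\sigma} \ge 0$, and $\sum \lambda^{h,\sigma} = 1$. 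Linearity of the objective then gives
\[
\bar f^T y^* + \bar v \alpha^* = \sum_{(h,\sigma)\in\mathcal{S}} \lambda^{h,\sigma} \bigl(\bar f^T y^{h,\sigma} + \bar v \alpha^{h,\sigma}\bigr) \ge \min_{(h,\sigma)\in\mathcal{S}}\, \min\bigl\{\bar f^T y + \bar v \alpha : (y,\alpha) \in P^{h,\sigma}\bigr\}.
\]
Let $(h^*,\sigma^*) \in \mathcal{S}$ achieve the outer minimum on the right. Applying Lemma~\ref{lemma:appendix2} to $(h^*, \sigma^*)$ produces a feasible solution $(\hat y^{h^*,\sigma^*}, \hat\alpha^{h^*,\sigma^*}) \in \{0,1\}^\eta \times \R_{\ge}$ of \mkcone\ with cost at most $(1+\epsilon)\,\min\{\bar f^T y + \bar v \alpha : (y,\alpha) \in P^{h^*,\sigma^*}\}$. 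Chaining these inequalities yields
\[
\OPT^Z \le \bar f^T \hat y^{h^*,\sigma^*} + \bar v \hat\alpha^{h^*,\sigma^*} \le (1+\epsilon)\bigl(\bar f^T y^* + \bar v \alpha^*\bigr) = (1+\epsilon) \min\{\bar f^T y + \bar v \alpha : (y,\alpha) \in Q\},
\]
which is the desired bound.

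\textbf{Main obstacle.} There is no real obstacle at this stage: Lemma~\ref{lemma:appendix2} already supplies the rounding mechanism (via grouping items by signature buckets and applying the two-variable LP argument of Lemma~\ref{lemma:appendix1}), and the boundedness of each $P^{h,\sigma}$ makes the reduction from optimizing over $Q$ to optimizing over the union entirely routine. The only point worth double-checking is that the set $\mathcal{S}$ is nonempty whenever \mkcone\ is feasible, so that the minimum on the right-hand side is well-defined; this follows because any 0/1 feasible solution of \mkcone\ with $y \neq 0$ belongs to some $P^{h,\sigma}$ by taking $h$ to be its smallest support index and $\sigma_k = \min\{|\{i \in S^{h,k} : y_i = 1\}|, J\}$.
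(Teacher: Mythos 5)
Your proposal is correct and follows essentially the same route as the paper: reduce the minimum over $Q$ to the minimum over a single $P^{h,\sigma}$ and then invoke Lemma~\ref{lemma:appendix2}. The only cosmetic difference is that you pass through an explicit convex-combination/linearity argument, whereas the paper observes that an optimal extreme point of $Q$ is an extreme point of some $P^{h,\sigma}$; both are routine and equally valid.
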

\begin{proof}
    Let $(\dot{y},\dot{\alpha}) \in Q$ be an optimal extreme point. 
    It follows from the definition of~$Q$ that every extreme point of~$Q$ is also an extreme point of $P^{h,\sigma}$ for some $h \in [\eta]$ and signature~$\sigma$. 
    This implies that $(\dot{y},\dot{\alpha})$ is an optimal solution to $\min\{\bar f^T y + \bar v \alpha :  (y,\alpha)  \in P^{h,\sigma}\}$.
    By Lemma~\ref{lemma:appendix2}, there is a feasible solution $(\hat y, \hat \alpha)$ to \mkcone\ such that 
        \(\bar f^T \hat y + \bar v \hat \alpha \leq (1+\epsilon)\min\{\bar f^T y + \bar v \alpha :  (y,\alpha)  \in P^{h,\sigma}\} = (1+\epsilon)\min \{ \bar f^T y + \bar v \alpha : (y,\alpha) \in Q\} \).
\end{proof}

\end{document}